\documentclass{sig-alternate}

\usepackage{amsthm,amssymb,amsmath}
\usepackage{graphics}
\usepackage{tikz}
\usepackage[plainpages=false,pdfpagelabels,colorlinks=true,citecolor=blue,hypertexnames=false]{hyperref}
\usepackage{color}

\newtheorem{theorem}{Theorem}
\newtheorem{prop}[theorem]{Proposition}

\newtheorem{lemma}[theorem]{Lemma}
\newtheorem{algorithm}[theorem]{Algorithm}

\newtheorem{defi}[theorem]{Definition}
\newtheorem{example}[theorem]{Example}

\overfullrule=1ex

\def\ord{\operatorname{ord}}

\def\O{\mathrm{O}}
\def\k{p}

\let\set\mathbb

\def\lclm{\operatorname{lclm}}

\def\<#1>{\langle#1\rangle}

\newfont{\mycrnotice}{ptmr8t at 7pt}
\newfont{\myconfname}{ptmri8t at 7pt}

\permission{Permission to make digital or hard copies of all or part of this work for personal or classroom use is granted without fee provided that copies are not made or distributed for profit or commercial advantage and that copies bear this notice and the full citation on the first page. Copyrights for components of this work owned by others than ACM must be honored. Abstracting with credit is permitted. To copy otherwise, or republish, to post on servers or to redistribute to lists, requires prior specific permission and/or a fee. Request permissions from permissions@acm.org.}

\conferenceinfo{ISSAC'15,}{July 6--9, 2015, Bath, United Kingdom.\\
{\mycrnotice{Copyright is held by the owner/author(s). Publication rights licensed to ACM.}}}
\copyrightetc{ACM \the\acmcopyr}
\crdata{978-1-4503-3435-8/15/07\ ...\$15.00.\\
Update with your DOI string http://XXXX provided in ACM eform confirmation email \& pdf of ACM eform}

\clubpenalty=10000
\widowpenalty = 10000

\conferenceinfo{ISSAC'15,}{July 6--9, 2015, Bath, United Kingdom} 
\copyrightetc{Copyright is held by the owner/author(s). Publication rights licensed to ACM. \\ \the\acmcopyr}
\crdata{ACM 978-1-4503-3435-8/15/07\ ...\$15.00}
\doi{DOI: http://dx.doi.org/10.1145/2755996.2756658}
%\doi{}

\CopyrightYear{2015}
% TODO:
% \permission{}
% \crdata{978-1-4503-2501-1/14/07}
% \doi{http://dx.doi.org/10.1145/2608628.2608634}

\begin{document}

\allowdisplaybreaks

\title{Integral D-Finite Functions}

\numberofauthors{2}

\author{%
 \alignauthor
 \leavevmode
 \mathstrut Manuel Kauers\titlenote{Supported by the Austrian Science Fund (FWF): Y464.\vspace{-4pt}}\\[\smallskipamount]
  \affaddr{\leavevmode\mathstrut RISC / Johannes Kepler University}\\
  \affaddr{\leavevmode\mathstrut 4040 Linz, Austria}\\
  \affaddr{\leavevmode\mathstrut mkauers@risc.jku.at}
 \and
 \mathstrut Christoph Koutschan\titlenote{Supported by the Austrian Science Fund (FWF): W1214.\vspace{5pt}}\\[\smallskipamount]
  \affaddr{\mathstrut RICAM / Austrian Academy of Sciences}\\
  \affaddr{\mathstrut 4040 Linz, Austria}\\
  \affaddr{\mathstrut christoph.koutschan@ricam.oeaw.ac.at}
}

\maketitle
\begin{abstract}
  We propose a differential analog of the notion of integral closure of
  algebraic function fields. We present an algorithm for computing the integral
  closure of the algebra defined by a linear differential operator. Our
  algorithm is a direct analog of van Hoeij's algorithm for computing integral
  bases of algebraic function fields.
\end{abstract}

%\kern-\medskipamount

\category{I.1.2}{Computing Methodologies}{Symbolic and Algebraic Manipulation}[Algorithms]

%\kern-\medskipamount

\terms{Algorithms}

%\kern-\medskipamount

\keywords{Integral Basis, D-finite Function, Differential Operator}

%\kern-\medskipamount

\section{Introduction}

% recall algebraic case
The notion of integrality is a classical concept in the theory of algebraic
field extensions. If $R$ is an integral domain, $k$ the quotient field of~$R$, and
$K=k(\alpha)$ an algebraic extension of~$k$ of degree~$d$, then an element of~$K$ is
called \emph{integral} if its monic minimal polynomial has coefficients
in~$R$. While $K$ forms a $k$-vector space of dimension~$d$, the set of
all integral elements of $K$ forms an $R$-module, called the \emph{integral
 closure} (or \emph{normalization}) of $R$ in~$K$, and commonly denoted by $\mathcal{O}_K$. A $k$-vector
space basis of~$K$ which at the same time generates $\mathcal{O}_K$ as an
$R$-module is called an \emph{integral basis.} For example, when $R=\set Z$,
$k=\set Q$, and $K=\set Q(\alpha)$ with $\alpha=\sqrt[3]4$, then the canonical
vector space basis $\{1,\alpha,\alpha^2\}$ of~$K$ is not an integral basis,
because $\tfrac12\alpha^2=\sqrt[3]{2}$ is an integral element of~$K$ (its
minimal polynomial is $x^3-2$) but not a $\set Z$-linear combination of
$1,\alpha,\alpha^2$. An integral basis in this example is
$\{1,\alpha,\tfrac12\alpha^2\}$.

% algorithms for the algebraic case

The concept of integral closure has been studied in rather general
  domains~\cite{Swanson06,deJong98}.  To compute an integral basis for an
  algebraic number field, special algorithms have been
  developed~\cite{Ford78,cohen93}.  At least two different approaches are
known for algebraic function fields, i.e., the case when $R=C[x]$ for some
field~$C$, $k=C(x)$, and $K=k[y]/\<M>$ for some irreducible polynomial $M\in
k[y]$.  The algorithm derived by Trager~\cite{Trager84} in his thesis is an
adaption of an algorithm for number fields, and the algorithm by van
Hoeij~\cite{vanHoeij94} is based on the idea of successively canceling lower
order terms of Puiseux series.

% goal: analogous notions and algorithms for differential operators
The theory of algebraic functions parallels in many ways the theory of D-finite
functions, i.e., the theory of solutions of linear differential operators. It is
therefore natural to ask what corresponds to the notion of integrality in this
latter theory. In the present paper, we propose such a definition and give an
algorithm which computes integral bases according to this definition. Our
algorithm and the arguments underlying its correctness are remarkably similar to
van Hoeij's algorithm for computing integral bases of algebraic function fields.

In view of the key role that integral bases play for indefinite integration
(Hermite reduction) of algebraic
functions~\cite{Trager84,bronstein98,bronstein98a}, we have hope that results
presented below will help to develop new algorithms for indefinite integration
of D-finite functions. An example pointing in this direction is given in the end.

\medskip\noindent
\textbf{Acknowledgment.}
We want to thank the anonymous referees for their detailed and valuable comments.

\section{Integral Functions,\texorpdfstring{\hfill\break}{} Integral Closure, and\texorpdfstring{\hfill\break}{} Integral Bases}

% introduce notation 
Throughout this paper, let $C$ be a computable field of characteristic zero,
$\bar C$~an algebraically closed field containing~$C$ (not necessarily the
smallest), and $x$~transcendental over~$\bar C$.  When $R$ is a subring of $\bar
C(x)$, we write $R[D]$ for the algebra of differential operators with
coefficients in~$R$, i.e., the algebra of all (formal) polynomials
$\ell_0+\ell_1D+\cdots+\ell_rD^r$ with $\ell_0,\dots,\ell_r\in R$.  This algebra
is equipped with the natural addition and the unique noncommutative
multiplication respecting the commutation rules $Dc=cD$ for all $c\in R\cap\bar
C$ and $Dx=xD+1$. Typical choices of $R$ will be $C[x]$, $\bar C[x]$, $C(x)$, or
$\bar C(x)$ in the following.

% recall series solutions of differential operators.
For an operator $L=\ell_0+\ell_1D+\cdots+\ell_rD^r\in\bar C[x][D]$ with $\ell_r\neq0$
we denote by $\ord(L)=r$ the order of~$L$. Recall that such an operator with $x\nmid\ell_r$
admits a fundamental system of formal power series solutions, i.e., the
vector space $V\subseteq\bar C[[x]]$ consisting of all the power series $f$ with
$L\cdot f=0$ has dimension~$r$. When $x\mid\ell_r\neq0$, there is still a
fundamental system of generalized series solutions of the form
$\exp(p(x^{-1/s}))x^\nu a(x^{1/s},\log(x))$ for some $s\in\set N$, $p\in\bar
C[x]$, $\nu\in\bar C$, $a\in\bar C[[x]][y]$.  (This notation is not meant to
imply that $a$ has a nonzero constant term, so the series in general does not
start at~$x^\nu$ but at some $x^{\nu+i}$ where $i\in\set N$ is such that $x^i$ is
the lowest order term of~$a$.)  We restrict our attention here to the case where
$p=0$ and $s=1$. Moreover, we want to assume that $\nu\in C$ (this can always
be achieved by a suitable choice of~$C$), to ensure that the output of our
algorithm involves only coefficients in~$C$. Hence we only consider operators~$L$
which admit a fundamental system in $\bigcup_{\nu\in C} x^\nu\bar C[[x]][\log x]$.
It is well known~\cite{ince26} how to determine the first terms of a basis of such
solutions for a given operator $L\in\bar C[x][D]$.  By a linear change of
variables, the same techniques can also be used to find the first terms of
solutions in $\bigcup_{\nu\in C} (x-\alpha)^\nu\bar
C[[x-\alpha]][\log(x-\alpha)]$, for any given $\alpha\in\bar C$.  More
precisely, if $L$ belongs only to $C[x][D]$ and $\alpha\in\bar C$, then these
solutions are actually linear combinations of elements of
$\bigcup_{\nu\in C}(x-\alpha)^\nu C(\alpha)[[x-\alpha]][\log(x-\alpha)]$. 
For a field $K$ with $C\subseteq K\subseteq\bar C$ we will use the notation 
\[
  K[[[x-\alpha]]]:=\bigcup_{\nu\in C}(x-\alpha)^\nu K[[x-\alpha]][\log(x-\alpha)].
\]
Observe that this is not a ring or a $K$-vector space. Also observe that the exponents
$\nu$ are restricted to the small field~$C\subseteq K$, although the dependence on the
choice of $C$ is not reflected by the notation. We hope that the intended field~$C$
will always be clear from the context. 

% Moreover, if $b\in (x-\alpha)^\nu\bar
% C[[x-\alpha]][\log(x-\alpha)]$ is a solution of $L\in C[x][D]$ and if
% $\phi\colon\bar C\to\bar C$ is a field homomorphism which leaves $C(\alpha)$
% fixed, then also the series $\phi(b)\in(x-\alpha)^\nu\bar
% C[[x-\alpha]][\log(x-\alpha)]$ obtained from $b$ by applying $\phi$ to all its
% coefficients is a solution of~$L$.

% integral elements
An operator $L\in\bar C[x][D]$ shall be considered integral if all the terms
in all its series solutions remain above a certain threshold. In the algebraic
case, where series solutions involve at worst only fractional exponents, the
stipulation of having only nonnegative exponents in all the solutions happens
to be equivalent to the requirement that the monic minimal polynomial has
polynomial coefficients. In the differential case, solutions involving
fractional exponents cause factors in the leading coefficient of~$L$
regardless of whether the exponents are positive or negative. Therefore it
doesn't seem promising to use the leading coefficient of~$L$ for defining
integrality.  Instead, we will consider the exponents of its series solutions.
These exponents~$\nu$ may however belong to a field~$C$ which is not necessarily
ordered, and there may be logarithmic terms. For our purposes we will use the
following definition of integrality, depending on a function~$\iota$ which
can be chosen according to the needs of the user.
\begin{defi}\label{def:1}
  Let $\iota\colon C/\set Z\times\set N\to C$ be a function such that
  % $\iota(\nu+\set Z,j)\in\nu+\set Z$ for every $\nu\in C$ and $j\in\set N$, 
  % $\iota(\nu_1+\set Z,j_1)+\iota(\nu_2+\set Z,j_2)-\iota(\nu_1+\nu_2+\set Z,j_1+j_2)\geq0$
  % for every $\nu_1,\nu_2\in C$ and $j_1,j_2\in\set N$,
  % and $\iota(\set Z,0)=0$.
  \begin{enumerate}
  \item $\iota(\nu+\set Z,j)\in\nu+\set Z$ for every $\nu\in C$ and $j\in\set N$, 
  \item $\iota(\nu_1+\set Z,j_1)+\iota(\nu_2+\set Z,j_2)-\iota(\nu_1+\nu_2+\set Z,j_1+j_2)\geq0$
  for every $\nu_1,\nu_2\in C$ and $j_1,j_2\in\set N$,
  \item $\iota(\set Z,0)=0$.
  \end{enumerate}
  A series $f\in \bar C[[[x-\alpha]]]$ is called \emph{integral with respect
    to~$\iota$} if for all terms $(x-\alpha)^\mu\log(x-\alpha)^j$
  occurring with a nonzero coefficient in $f$ we have $\mu-\iota(\mu+\set Z, j)\geq0$.
  (For this to make sense the left-hand sides of the occurring inequalities
  have to be interpreted as integers, not as elements of~$C$.)
\end{defi}

The function $\iota(\cdot,j)$ specifies for each element $\nu+\set Z$ of $C/\set Z$ the smallest
element $\nu$ such that $x^\nu\log(x)^j$ should be considered integral. If
$\iota(\nu+\set Z,j)=\nu$, then $x^\nu\log(x)^j,x^{\nu+1}\log(x)^j$, $\dots$ are integral and
$x^{\nu-1}\log(x)^j,x^{\nu-2}\log(x)^j,\dots$ are not. The condition $\iota(\set Z,0)=0$ implies
that formal Laurent series are integral if and only if they are in fact formal
power series. 

\begin{example}\label{ex:iota}
  A natural choice for $C\subseteq\set C$ is perhaps $\iota(z+\set Z,0)=z$ for
  all $z\in\set C$ with $0\leq\Re(z)<1$, and $\iota(z+\set Z,j)=z$ for all
  $z\in\set C$ with $0<\Re(z)\leq1$ when $j\geq1$. With this convention, a
  term $x^\nu\log(x)^j$ is integral if and only if the corresponding function
  is bounded in a small neighborhood of the origin. For example, $1$,
  $x^{\sqrt{-1}}$, $x\log(x)$ all are integral, while $x^{-1}$,
  $x^{\sqrt{-1}-1}$, $\log(x)$ are not. Unless otherwise stated, we shall
  always assume this choice of $\iota$ in the examples given below.
\end{example}

\begin{prop}\label{prop:int}
  Let $\alpha\in\bar C$ and let $R$ be the set of all $\bar C$-linear combinations of
  series in 
  $(x-\alpha)^\nu\bar C[[x-\alpha]][\log(x-\alpha)]$, $\nu\in C$.
  Then:
  \begin{enumerate}
  \item In every series $f\in R$ there are at most finitely many terms
    $(x-\alpha)^\mu\log(x-\alpha)^j$ which are not integral.
  \item The set $R$ together with the natural addition and multiplication forms
    a ring, and $\{\,f\in R\mid\text{$f$ is integral}\,\}$ forms a subring
    of~$R$.
  \end{enumerate}
\end{prop}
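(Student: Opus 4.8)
The plan is to exploit the fact that every element of $R$ is, by definition, a \emph{finite} $\bar C$-linear combination of series, each lying in some $(x-\alpha)^{\nu}\bar C[[x-\alpha]][\log(x-\alpha)]$ with $\nu\in C$. Abbreviating $x-\alpha$ by $t$, such a building block contributes only terms $t^{\mu}\log(t)^{j}$ with $\mu\in\nu+\set N$ and $0\le j\le d$ for some fixed $d$; in particular, for each fixed residue class $\nu+\set Z$ and each fixed $j$ the exponents $\mu$ that occur are bounded below. Taking finite linear combinations preserves these features: a general $f\in R$ involves only finitely many residue classes modulo $\set Z$, its $\log$-degree is bounded by some $D$, and within each class the exponents remain bounded below. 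This structural description is all that Part~1 needs.

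For Part~1 I would observe that whether a term $t^{\mu}\log(t)^{j}$ is integral depends only on the pair $(\mu+\set Z,j)$, because the threshold $\iota(\mu+\set Z,j)$ lies in $\mu+\set Z$ and integrality means $\mu-\iota(\mu+\set Z,j)\ge0$. Fixing one of the finitely many residue classes $\rho=\nu+\set Z$ occurring in $f$ and one of the finitely many values $j\le D$, a term is non-integral exactly when $\mu<\iota(\rho,j)$. Since the exponents $\mu\in\rho$ appearing in $f$ are bounded below while $\iota(\rho,j)$ is a fixed element of $\rho$, only finitely many $\mu\in\rho$ satisfy this strict inequality. Summing over the finitely many pairs $(\rho,j)$ then yields finitely many non-integral terms in total.

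To see that $R$ is a ring I would check closure under multiplication, the additive and $\bar C$-module structure being clear. The product of two building blocks $t^{\nu_1}a_1$ and $t^{\nu_2}a_2$ with $a_1,a_2\in\bar C[[t]][\log(t)]$ equals $t^{\nu_1+\nu_2}a_1a_2$; here $\nu_1+\nu_2\in C$ since $C$ is a field, and $a_1a_2\in\bar C[[t]][\log(t)]$ because $\bar C[[t]][\log(t)]$ is itself a ring, each coefficient of the Cauchy product being a finite sum so that the product is a well-defined series. Extending bilinearly, products of finite linear combinations are again finite linear combinations of such blocks, so $R$ is closed under multiplication, while associativity and distributivity are inherited from the formal term arithmetic.

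Finally, for the subring of integral elements I would argue separately for addition and multiplication. For addition, every term occurring in $f+g$ already occurs in $f$ or in $g$, so if all terms of $f$ and of $g$ are integral then so are all terms of $f+g$; the same reasoning covers differences and $\bar C$-multiples, and $1=t^{0}\log(t)^{0}$ is integral by condition~(3), which gives $\iota(\set Z,0)=0$. The essential point is multiplicative closure. A nonzero term $t^{M}\log(t)^{J}$ of $fg$ must receive a non-cancelling contribution from some product $t^{\mu_1}\log(t)^{j_1}\cdot t^{\mu_2}\log(t)^{j_2}$ of a genuine term of $f$ and a genuine term of $g$, with $\mu_1+\mu_2=M$ and $j_1+j_2=J$. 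Integrality of $f$ and $g$ gives $\mu_1\ge\iota(\mu_1+\set Z,j_1)$ and $\mu_2\ge\iota(\mu_2+\set Z,j_2)$; adding these and applying the superadditivity condition~(2) yields $M=\mu_1+\mu_2\ge\iota(M+\set Z,J)$, so $t^{M}\log(t)^{J}$ is integral. I expect the only delicate point to be this cancellation argument: one must note that it suffices to exhibit a single surviving contributing pair, since the property being verified depends only on the exponent and $\log$-power of the resulting term and not on its coefficient.
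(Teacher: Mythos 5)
Your proof is correct and follows essentially the same route as the paper's: finiteness in Part~1 via the bounded-below exponents within each of the finitely many residue classes and the bounded $\log$-degree, and the subring property via the observation that terms of $f+g$ come from the summands while terms of $f\cdot g$ arise from products of terms, to which the superadditivity condition~(2) on $\iota$ applies. Your added details (explicit closure of $R$ under multiplication, integrality of $1$ via condition~(3)) are fine elaborations of steps the paper treats as immediate.
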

\begin{proof}
  1. \ First consider the case when $f\in (x-\alpha)^\nu\bar
  C[[x-\alpha]][\log(x-\alpha)]$ for some $\nu\in C$. Let $\deg(f)$ denote the
  highest power of $\log(x-\alpha)$ in~$f$.  Then the only possible non-integral
  terms in $f$ can be $(x-\alpha)^{\nu+i}\log(x-\alpha)^j$ for
  $j\in\{0,\dots,\deg(f)\}$ and $i\in\{0,\dots,\iota(\nu+\set
  Z,j)-\nu-1\}$. These are finitely many. In general, if $f$ is a linear
  combination of some series in $(x-\alpha)^\nu\bar
  C[[x-\alpha]][\log(x-\alpha)]$ with possibly distinct~$\nu\in C$, the set of
  all non-integral terms is still a finite union of finite sets of non-integral
  terms, and therefore finite.

  2. \ It is clear that $R$ is a ring. To see that the integral elements form a subring,
  let $f,g\in R$ be integral. Then the series $f+g$ cannot contain any term which is
  not present in at least one of the two summands, so all terms of $f+g$ are integral 
  and $f+g$ as a whole is integral. Now consider multiplication:
  for any term $(x-\alpha)^\mu\log(x-\alpha)^j$ in $f\cdot g$
  there must be some terms $\tau$ in $f$ and $\sigma$ in $g$ such that $\sigma\tau=
  (x-\alpha)^\mu\log(x-\alpha)^j$, say $\tau=(x-\alpha)^{\mu_1}\log(x-\alpha)^{j_1}$
  and $\sigma=(x-\alpha)^{\mu_2}\log(x-\alpha)^{j_2}$. 
  Since $f$ and $g$ are integral, we have $\mu_1-\iota(\mu_1+\set Z,j_1)\geq0$
  and $\mu_2-\iota(\mu_2+\set Z,j_2)\geq0$. The assumption on $\iota$
  in Definition~\ref{def:1} implies that $(\mu_1+\mu_2)-\iota(\mu_1+\mu_2+\set Z,j_1+j_2)
  =\mu-\iota(\mu+\set Z,j)\geq0$. Hence all terms of $f\cdot g$ are integral,
  so also the product of two integral elements is integral.
\end{proof}

\begin{defi}\label{def:monic}
  Let $L\in\bar C(x)[D]$ and $\iota$ be as in Definition~\ref{def:1}.
  \begin{enumerate}
  \item We call $L$ \emph{regular} if it has a fundamental system in
    $\bar C[[[x-\alpha]]]$ for
    every $\alpha\in\bar C$.
  \item $L$ is called \emph{(locally) integral at~$\alpha$ with respect
      to~$\iota$} if it admits a fundamental system in $\bar C[[[x-\alpha]]]$ 
    whose elements all are integral.
  \item $L$ is called \emph{(globally) integral with respect to~$\iota$} if it
    is locally integral at $\alpha$ in the sense of part~2 for every
    $\alpha\in\bar C$.
  \end{enumerate}
\end{defi}

Of course part~2 of this definition is independent of the choice of the
fundamental system. In fact, $L$~is locally integral at $\alpha$ iff all its
series solutions in $x-\alpha$ are integral and form a $\bar C$-vector
space of dimension~$\ord(L)$.

\begin{example}
  \begin{enumerate}
  \item The operator $(2-x)+2(2-2x+x^2)D+4(x-1)xD^2\in\set Q[x][D]$ 
  is locally integral at $\alpha=0$, because its two
  linearly independent solutions
  \begin{alignat*}1
    &1 - \tfrac12x-\tfrac1{24}x^3-\tfrac7{384}x^4-\tfrac{53}{3840}x^5+\O(x^6),\\
    &x^2+\tfrac16x^3+\tfrac16x^4+\tfrac{13}{120}x^5+\O(x^6)
  \end{alignat*}
  are both integral. It is also locally integral at $\alpha=1$, because its two 
  linearly independent solutions 
  \begin{alignat*}1
    &(x{-}1)^{1/2} + \O((x{-}1)^6),\\
    &1-\tfrac12(x{-}1)+\tfrac18(x{-}1)^2-\tfrac1{48}(x{-}1)^3+\O((x{-}1)^4)
  \end{alignat*}
  are integral as well.
  
  The operator is also globally integral because at all $\alpha\in\set C \setminus\{0,1\}$ 
  it has a fundamental system of formal power series, and
  formal power series are always integral.

  \item The operator $1+xD\in\set Q[x][D]$ is not locally integral at $\alpha=0$, because
  it has the non-integral solution~$\frac1x$. It is therefore also not globally integral.
  
  \item The operator $(-1-2x)+(x+2x^2)D+(x^3+x^4)D^2\in\set Q[x][D]$ is
  not locally integral at $\alpha=0$ although all its series solutions are. The
  reason is that it has only one series solution in $\set C[[[x]]]$ 
  while our definition requires that the number of
  linearly independent series solutions must match the order of the operator.
  In other words, generalized series solutions involving exponential terms, like
  the solution $\exp(\frac1x)$ in the present example, are always considered as
  not integral.
\end{enumerate}
\end{example}

% integral closure

Let $L=\ell_0+\cdots+\ell_rD^r\in C[x][D]$ with $\ell_r\neq0$ and consider the
quotient algebra $\bar C(x)[D]/\<L>$, where $\<L>:=\bar C(x)[D]L$ denotes the left ideal
generated by $L$ in~$\bar C(x)[D]$. The algebra $\bar C(x)[D]/\<L>$ is generated as a 
$\bar C(x)$-vector space by the basis $\{1,D,\dots,D^{r-1}\}$. It is also a
$\bar C(x)[D]$-left module, and we can interpret its elements as all those
``functions'' which can be reached by letting an operator $P\in\bar C(x)[D]$ act
on a ``generic solution'' of~$L$, very much like the elements of an algebraic
extension field $\bar C(x)[y]/\<M>$ can be described as those objects which can be
reached by applying a polynomial $P\in\bar C(x)[y]$ to a ``generic root'' of~$M$.  A
difference in this analogy is that in the algebraic case there are only finitely
many roots while in the differential case we have a finite dimensional $\bar C$-vector
space of solutions.

\begin{defi}\label{def:2} % also define local integral closure?
  Let $L=\ell_0+\cdots+\ell_rD^r\in C[x][D]$ with $\ell_r\neq0$ be a regular
  operator and let $\iota$ be as in Definition~\ref{def:1}.
  \begin{enumerate}
  \item An element $P\in A=\bar C(x)[D]/\<L>$ is called \emph{integral}
    (with respect to~$\iota$)
    if $P\cdot f$ is integral (with respect to~$\iota$)
    for every series solution $f$ of~$L$.
  \item The $\bar C[x]$-left module $\mathcal{O}_L$ of all integral elements of $A$
    is called the \emph{integral closure} of $\bar C[x]$ in~$A$.
  \item A $\bar C(x)$-vector space basis 
    \[\{B_1,\dots,B_r\}\subseteq\bar C(x)[D]/\<L>\]
    is called an \emph{integral basis} if it also generates $\mathcal{O}_L$
    as $\bar C[x]$-left module.
  \end{enumerate}  
\end{defi}

It is easy to see that $\mathcal{O}_L$ is a $\bar C[x]$-left module. Note
however that $\mathcal{O}_L$ is in general not a $\bar C[x][D]$-left module,
because the application of $D$ may turn integral elements into non-integral ones
(for example, $D\cdot x^{1/2}=\tfrac12 x^{-1/2}$ when $\iota(\tfrac12+\set Z,0)=\tfrac12$).

\begin{example}
  \begin{enumerate}
  \item The operator $L=1-D\in\set Q[x][D]$ has for every $\alpha\in\set C$ one
    power series solution of the form $f=1+\O(x-\alpha)$. 
    Since $f$ is integral we have $1\in\mathcal{O}_L$.
    Since $(x-\alpha)^{-1}f$ is not integral for any~$\alpha$, 
    we have in fact that $\{1\}$ is an integral basis.
  \item The operator $L=1+xD$ has the solution $f=\frac1x$. It is
    integral for every $\alpha\neq0$, but not integral at $\alpha=0$. 
    However, $xf=1$ is integral, hence $x\in\mathcal{O}_L$,
    and in fact $\{x\}$ is an integral basis. 
  \item Whenever $L$ has only power series solutions at every $\alpha\in\bar C$, we
    clearly have $\{1,D,\dots,D^{r-1}\}\subseteq\mathcal{O}_L$. However, there
    may still be integral elements that are not $C[x]$-linear combinations of
    these. For example, observe that for the operator $L=(x-1)+D-xD^2$, which
    has two solutions $\exp(x)=1+x+\tfrac12x^2+\O(x^3)$ and $(2x+1)\exp(-x)=x^2+\O(x^3)$,
    we have the nontrivial element $\tfrac1x(1-D)\in\mathcal{O}_L$. Note
    that it is integral at all $\alpha\neq0$ as well.
  \item It can also happen that $1\in\mathcal{O}_L$ but $D\not\in\mathcal{O}_L$.
    For example, for $L=(-1+2x)+(1-4x)D+2xD^2$ we have two solutions
    $1+x+\tfrac12x^2+\O(x^3)$ and $x^{1/2}+x^{3/2}+\tfrac12x^{5/2}+\O(x^3)$ at
    $\alpha=0$. Since both are integral (and there are two linearly independent
    power series solutions for every $\alpha\neq0$) we have $1\in\mathcal{O}_L$.
    However, $D\not\in\mathcal{O}_L$, because the derivative of the second
    solution is $\tfrac12x^{-1/2}+\tfrac32x^{1/2}+\tfrac54x^{3/2}+\O(x^2)$,
    which is not integral since it involves the term $x^{-1/2}$.  An integral
    basis in this case turns out to be $\{1,xD\}$.
  \item We have produced a prototype implementation in Mathematica of the
    algorithm described below. The code is available on the homepage of the first author.
    For the operator $L=x^3D^3+xD-1$, it finds the integral basis
    $\{1,xD,xD^2-D+\frac{1}{x}\}$. A fundamental system of $L$ is
    $\{x, x\log(x), x\log(x)^2\}$.
  \item Let $L=24x^3D^3-134x^2D^2+373xD-450$. This operator has the solutions
    $x^{3/2}$, $x^{10/3}$, and $x^{15/4}$. Our code finds the integral basis
    \[
    \Bigl\{\frac{1}{x},\frac{1}{x^2}D-\frac{3}{2x^3},\frac{1}{x}D^2-\frac{7}{2x^2}D+\frac{9}{2x^3}\Bigr\}.
    \]
  \end{enumerate}
\end{example}

In the analogy with algebraic functions, the integral operators from
Definition~\ref{def:monic} correspond to the monic minimal polynomials with
coefficients in a ring, and the integral elements of Definition~\ref{def:2}
correspond to integral elements of an algebraic function
field. Definitions~\ref{def:monic} and~\ref{def:2} are obviously connected as
follows.

\begin{prop}
  Let $L\in C[x][D]$ and $\tilde L\in\bar C(x)[D]$ be regular and 
  assume that there exists $P\in\bar C(x)[D]$ such that for every $\alpha\in\bar C$
  we have 
  \[
    \{\,f \mid \tilde L\cdot f=0\,\}
   =\{\,P \cdot f\mid L\cdot f=0\,\}
  \]
  where $f$ runs over $\bar C[[[x-\alpha]]]$ on both sides. Then
  $P+\<L>\in\bar C(x)[D]/\<L>$
% CK: Caveat! Depending on the line break, the spaces between P and rmod and L can get too small.
  is integral in the sense of Definition~\ref{def:2} if and only if $\tilde L$
  is integral in the sense of Definition~\ref{def:monic}.
\end{prop}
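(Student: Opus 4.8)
The claim is a direct unwinding of the two definitions, the bridge between them being the hypothesis on~$P$. Fix $\alpha\in\bar C$ and write, for brevity, $S_\alpha(L)$ for the set of solutions of~$L$ lying in $\bar C[[[x-\alpha]]]$, and likewise $S_\alpha(\tilde L)$ for~$\tilde L$. The assumption then reads $S_\alpha(\tilde L)=\{\,P\cdot f\mid f\in S_\alpha(L)\,\}$ for every~$\alpha$, and the plan is to read off the desired equivalence by substituting this identity into the two notions of integrality.

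First I would record that the objects involved are well defined. If $f$ lies in a single summand $(x-\alpha)^\nu\bar C[[x-\alpha]][\log(x-\alpha)]$ of $\bar C[[[x-\alpha]]]$, then applying~$D$ and multiplying by the Laurent expansions at~$\alpha$ of the rational coefficients of~$P$ merely shift exponents by integers (keeping them bounded below) and do not raise the degree in $\log(x-\alpha)$; hence $P\cdot f$ again lies in $(x-\alpha)^{\nu'}\bar C[[x-\alpha]][\log(x-\alpha)]$ for some $\nu'\in\nu+\set Z\subseteq C$, so $P\cdot f\in\bar C[[[x-\alpha]]]$. Thus the right-hand side of the hypothesis is genuinely a set of series to which Definition~\ref{def:1} applies, and the phrase ``$P\cdot f$ is integral'' is meaningful.

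The equivalence then follows from a short chain. Since $\tilde L$ is regular, the remark following Definition~\ref{def:monic} shows that $\tilde L$ is globally integral if and only if, for every~$\alpha$, every element of $S_\alpha(\tilde L)$ is integral, the required full dimension of the solution space being supplied by regularity. Substituting the hypothesis turns this into the condition that $P\cdot f$ be integral for every~$\alpha$ and every $f\in S_\alpha(L)$, which is exactly the defining property of integrality of $P+\<L>$ in Definition~\ref{def:2}. The single step needing justification --- and the closest thing to an obstacle --- is matching the quantifier of Definition~\ref{def:2}, which ranges over all series solutions of~$L$, with the set $S_\alpha(L)$, whose members lie in single summands of $\bar C[[[x-\alpha]]]$. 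Here Proposition~\ref{prop:int} does the work: the integral series form a subring, in particular a $\bar C$-vector space, and (using that $L$ too is regular) $S_\alpha(L)$ spans the whole solution space at~$\alpha$; hence $P\cdot f$ is integral for every solution in that space precisely when it is integral for every $f\in S_\alpha(L)$. This reconciles the two formulations and closes the chain of equivalences in both directions.
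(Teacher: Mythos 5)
Your argument is correct; the paper itself offers no proof of this proposition, introducing it with ``Definitions~\ref{def:monic} and~\ref{def:2} are obviously connected as follows,'' and your write-up is exactly the definitional unwinding the authors have in mind: regularity of $\tilde L$ disposes of the dimension requirement in the remark after Definition~\ref{def:monic}, and the hypothesis on $P$ translates ``every solution of $\tilde L$ at $\alpha$ is integral'' into ``$P\cdot f$ is integral for every solution $f$ of $L$ at $\alpha$.'' Your two extra checks --- that $P\cdot f$ stays inside $\bar C[[[x-\alpha]]]$ so that integrality is even meaningful, and that the quantifier over all series solutions can be reduced to a spanning set via the subring property in Proposition~\ref{prop:int} --- are legitimate details the paper silently elides, and they are handled correctly.
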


\begin{lemma}\label{lem:denom}
  Let $L=\ell_0+\cdots+\ell_rD^r\in \bar C[x][D]$ with $\ell_r\neq0$ be a regular
  operator. Let $p_0,\dots,p_{r-1}\in\bar C(x)$ and let $p=x-\alpha\in\bar C[x]$
  be a factor of the common denominator of $p_0,\dots,p_{r-1}$.
  If $p_0+\cdots+p_{r-1}D^{r-1}\in\mathcal{O}_L$ then $p\mid\ell_r$. 
\end{lemma}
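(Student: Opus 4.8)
The plan is to prove the contrapositive: assuming $p=x-\alpha\nmid\ell_r$, I would show that $P:=p_0+\cdots+p_{r-1}D^{r-1}$ is \emph{not} in $\mathcal{O}_L$. Since $p\nmid\ell_r$ means $\ell_r(\alpha)\neq0$, the point $\alpha$ is an ordinary point of~$L$, so by the fact recalled in the preliminaries (applied after the shift $x\mapsto x-\alpha$) the operator~$L$ possesses an $\ord(L)$-dimensional space of formal power series solutions $f=\sum_{n\geq0}a_n(x-\alpha)^n$ at~$\alpha$, and---crucially---the leading coefficients $a_0,\dots,a_{r-1}$ may be prescribed arbitrarily and independently, since at an ordinary point the map sending a solution to its initial data is a bijection onto $\bar C^{\,r}$.

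Next I would isolate the most singular term of $P\cdot f$. Because $p$ divides the common denominator of the $p_i$, at least one $p_i$ has a genuine pole at~$\alpha$; let $m\geq1$ be the largest such pole order, so that every $(x-\alpha)^m p_i$ is regular at~$\alpha$ with value $d_i:=\bigl[(x-\alpha)^m p_i\bigr]_{x=\alpha}\in\bar C$, where $d_i\neq0$ precisely for those~$i$ whose $p_i$ has a pole of exact order~$m$. Multiplying $P\cdot f=\sum_i p_i f^{(i)}$ by $(x-\alpha)^m$ and evaluating at~$\alpha$, using $f^{(i)}(\alpha)=i!\,a_i$, shows that the coefficient of $(x-\alpha)^{-m}$ in $P\cdot f$ equals the linear form $\sum_i d_i\,i!\,a_i$ in the initial data. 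This form is not identically zero, since some $d_i\neq0$ and $i!\neq0$ in characteristic zero.

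I would then invoke the freedom of initial data at the ordinary point~$\alpha$ to pick a solution~$f$ for which this linear form does not vanish; for that~$f$, the Laurent series $P\cdot f$ contains the term $(x-\alpha)^{-m}$ with $m\geq1$ and no logarithmic factor. Condition~(3) of Definition~\ref{def:1}, namely $\iota(\set Z,0)=0$, then yields $-m-\iota(-m+\set Z,0)=-m<0$, so this term is non-integral in the sense of Definition~\ref{def:1}. Hence $P\cdot f$ is non-integral, and therefore $P\notin\mathcal{O}_L$, which is exactly the contrapositive of the claim.

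The step I expect to require the most care is the non-vanishing of this leading Laurent coefficient: one must rule out that the pole of maximal order in $P\cdot f$ gets cancelled for \emph{every} solution. This is precisely where the hypothesis $\ell_r(\alpha)\neq0$ is indispensable, as it is the unrestricted choice of $(a_0,\dots,a_{r-1})$ that forces the relevant linear form to attain a nonzero value. At a singular point the admissible initial data would be constrained---the exponents of the series solutions need no longer be integers and the space need not have full dimension---so the argument, and indeed the statement itself, could fail there.
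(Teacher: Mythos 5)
Your proof is correct and follows essentially the same route as the paper's: both argue the contrapositive at the ordinary point~$\alpha$, using the freely prescribable initial data of the power-series fundamental system to exhibit a solution $f$ for which the maximal-order pole of $P\cdot f$ survives. The paper merely makes your choice of initial data explicit, taking the basis solution $b_i=(x-\alpha)^i+\O((x-\alpha)^r)$ for an index $i$ realizing the maximal pole order, so that only the single term $p_iD^ib_i$ contributes to the leading Laurent coefficient.
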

\begin{proof}
  After performing a change of variables, we may assume that $p=x$.  By a
  classical result about linear differential equations (e.g., \cite{ince26}),
  $x\nmid\ell_r$ implies that $L$ admits a fundamental system
  $b_0,\dots,b_{r-1}$ in $C[[x]]$ with $b_i=x^i+\O(x^r)$ for $i=0,\dots,r-1$.
  Then $D^jb_i=i(i-1)\cdots(i-j+1)x^{i-j}+\O(x^{r-j})$ for $i=0,\dots,r-1$ and
  $j=0,\dots,r-1$. Let $e_i$ be the largest integer such that $x^{e_i}$ divides
  the denominator of~$p_i$, let $e=\max\{e_0,\dots,e_{r-1}\}$, and let
  $i\in\{0,\dots,r-1\}$ be some index with $e_i=e$. Then
  $p_iD^ib_i=i!x^{-e}+\O(x^{-e+1})$ and $p_jD^jb_i=\O(x^{-e+1})$ for all $j\neq
  i$. Hence $(p_0+p_1D+\cdots+p_{r-1}D^{r-1})\cdot b_i=i!x^{-e}+\O(x^{-e+1})$
  is not integral because $-e-\iota(-e+\set Z,0)=-e-\iota(\set Z,0)=-e<0$, 
  and hence $p_0+p_1D+\cdots+p_{r-1}D^{r-1}\not\in\mathcal{O}_L$.
\end{proof}

% MvH's notation
% $L$ algebraically closed of characteristic 0
% $x$ transcendental over $L$
% $y$ algebraic over $L(x)$ 
% $f$ minimal polynomial of $y$
% $K$ subfield of $L$ containing the coeffs of $f$
% $\bar K$ algebraic closure of $K$ in $L$
% $n$ degree of $f$
% $\overline{L[x]}$ integral closure of $L[x]$ in $L(x,y)$.

\section{Algorithm Outline}

% recall mvh's algorithm.

We shall now discuss how to construct an integral basis $\{B_0,\dots,B_{r-1}\}$
for a given regular operator $L\in C[x][D]$. The key observation is that van
Hoeij's algorithm for computing integral bases for algebraic function fields as
well as the arguments justifying its correctness and termination carry over
almost literally to the present setting. The remainder of this paper therefore
follows closely the corresponding sections of van Hoeij's paper.

The algorithm computes the basis elements $B_0,\dots,B_{r-1}$ in order, at each
stage~$d\in\{0,\dots,r-1\}$ starting with an initial conservative guess for
$B_d$ and refining it repeatedly until an operator $B_d$ is found which
together with $B_0,\dots,B_{d-1}$ generates the $\bar C[x]$-left module
consisting of all the elements of $\mathcal{O}_L$ corresponding to operators of
order~$d$ or less.  Although parts of the calculation take place in the large
field~$\bar C$, it will be shown that the elements $B_i$ in the resulting
integral basis always have coefficients in the small field~$C$, in which the
coefficients of the input operator~$L$ live.

It is not hard to find a suitable $B_0$: For each root $\alpha\in\bar C$ of the
leading coefficient $\ell_r$ of~$L$, compute the first terms of a basis
$\{b_1,\dots,b_r\}$ of solutions in $\bar C[[[x-\alpha]]]$.  Determine the smallest integer $e_\alpha$ such
that $(x-\alpha)^{e_\alpha}b_i$ is integral for every~$i$ according to the
chosen~$\iota$. Then $B_0$ can be set to the product of
$(x-\alpha)^{e_\alpha}$ over all~$\alpha$.  Since $e_\alpha=e_{\tilde\alpha}$
whenever $\tilde\alpha$ is a conjugate of~$\alpha$, it follows that $B_0$
belongs to~$C(x)$.

The outline of the algorithm is now given on a conceptual level.
In Section~\ref{sec:constr} a more detailed description of steps 5--7
will be given.

\def\eatspace#1{#1}
\def\step#1#2{\par\kern1pt\hangindent#2em\hangafter=1\noindent\rlap{\small#1}\kern#2em\relax\eatspace}

\begin{algorithm}\label{alg:8}\em\leavevmode\null\\
INPUT: A regular operator $L=\ell_0+\cdots+\ell_rD^r\in C[x][D]$ with $\ell_r\neq0$\\
OUTPUT: $\{B_0,\dots,B_{r-1}\}\subseteq C(x)[D]/\<L>$, an integral basis 
  of $\bar C(x)[D]/\<L>$.

\smallskip
\step 1 1 Set $s$ to the squarefree part of $\ell_r$. 
\step 2 1 Set $B_0$ to the zero-order operator described above. 
\step 3 1 For $d=1,\dots,r-1$, do the following:
\step 4 2 Set $B_d=s\,D\,B_{d-1}$. (Also $B_d=s^dD^dB_0$ would work.) Consider
\[
  E=\{\,A\in\mathcal{O}_L : \ord(A)\leq d\,\}\setminus \bigl(\bar C[x]B_0 + \cdots + \bar C[x]B_d\bigr).
\]
\step 5 2 While $E\neq\emptyset$, do the following:
\step 6 3 Construct $A\in E$ of the form
\[
  A = \frac1\k\bigl(a_0B_0+\cdots+a_{d-1}B_{d-1}+B_d\bigr)
\]
with $a_0,\dots,a_{d-1},\k\in C[x]$.
\step 7 3  We have
\begin{alignat*}1
  &\bar C[x]B_0+\cdots+\bar C[x]B_{d-1}+\bar C[x]B_d\\
  \subsetneq{}&
  \bar C[x]B_0+\cdots+\bar C[x]B_{d-1}+\bar C[x]A
  \subseteq
  \mathcal{O}_L.
\end{alignat*}
Replace $B_d$ by~$A$. (This makes $E$ strictly smaller.)
\step 8 1  Return $\{B_0,\dots,B_{r-1}\}$.
\end{algorithm}

% ending on todo list for correctness proof
In the refined version of the algorithm, we will see that the set~$E$ is never
explicitly constructed.  Instead, it suffices to be able to solve the
following subproblems. First, we need to decide whether $E=\emptyset$ for
recognizing the termination of the loop in lines 5--7; this is discussed in
Section~\ref{sec:constr}. Second, we need to show the existence of an
element~$A\in E$ of the form required in step~6 whenever $E\neq\emptyset$; see
Section~\ref{sec:exists}. In Section~\ref{sec:constr} we explain how such
an~$A$ is constructed. Finally, the termination of the loop in lines 5--7 is
proved in Section~\ref{sec:term}.  Except for these issues, the correctness of
the algorithm is obvious.

{\mathversion{bold}
\section{Existence of \texorpdfstring{{\large $A$} if {\large$E\neq\emptyset$}}{A if E is not empty}}\label{sec:exists}}

The arguments in this section are almost identical to those
in~\cite{vanHoeij94}. Nevertheless, for sake of completeness, we
formulate them here for the differential case.

In the $d$-th iteration of the algorithm we can assume by induction 
that $B_0,\dots,B_{d-1}$ with $\ord(B_i)=i$ for all~$i$ form a $\bar C[x]$-left module basis of all integral
elements of order up to $d-1$. We consider the case where the current choice
of~$B_d$, together with $B_0,\dots,B_{d-1}$, does not generate all integral
elements of order up to~$d$, i.e., $E\neq\emptyset$. Recall that
\[
  E=\{\,A\in\mathcal{O}_L : \ord(A)\leq d\,\}\setminus \bigl(\bar C[x]B_0 + \cdots + \bar C[x]B_d\bigr).
\]
We need to show that there exists an integral element $A\in E$ which can be
written in the form $\frac1\k\big(a_0B_0+\cdots+a_dB_d\big)$ with
$a_0,\dots,a_d,\k\in C[x]$ and $a_d=1$. The idea is as follows: starting from
an arbitrary element $A\in E$, we construct, in several steps, simpler
elements in~$E$ until we obtain one with the desired properties.

\begin{lemma}\label{lemma:9}
  If $E\neq\emptyset$, then there exists $A\in E$ of the form
  \begin{alignat}1\label{eq:1}
   A=\frac1{x-\alpha}\bigl(a_0B_0+\cdots+a_{d-1}B_{d-1}+a_dB_d\bigr)
  \end{alignat}
  with $\alpha\in\bar C$, $a_0,\dots,a_{d-1},a_d\in\bar C[x]$.
\end{lemma}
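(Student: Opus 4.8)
The plan is to start from an arbitrary element of $E$ and to simplify its denominator in two stages: first reduce it to a power of a single linear factor, and then reduce that power down to~$1$. The whole argument rests on two observations that follow directly from Proposition~\ref{prop:int}. Integrality is a purely \emph{local} property: by Definition~\ref{def:1} a series is integral if and only if each of its terms satisfies the required inequality, so integrality at a point $\alpha$ is completely insensitive to the behaviour elsewhere. Moreover, multiplying an integral series by a function that is regular at $\alpha$ again produces an integral series, since such a function is itself a formal power series and hence integral, and products of integral series are integral by Proposition~\ref{prop:int}; in particular multiplication by any $(x-\alpha)^k$ with $k\geq0$ preserves integrality because it only raises exponents at~$\alpha$.

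First I would fix any $A\in E$. Since $\ord(B_i)=i$, the set $\{B_0,\dots,B_d\}$ is a $\bar C(x)$-basis of the operators of order $\leq d$ modulo $\<L>$, so $A=\frac1q(a_0B_0+\cdots+a_dB_d)$ for a common denominator $q\in\bar C[x]$ and coefficients $a_0,\dots,a_d\in\bar C[x]$. Note that an element of $\mathcal{O}_L$ of order $\leq d$ lies in $E$ precisely when it is not contained in $\bar C[x]B_0+\cdots+\bar C[x]B_d$, so to produce elements of $E$ it suffices to produce integral operators (of order $\leq d$) that escape this module.

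For the first stage, suppose $q$ has at least two distinct roots and write $q=q_1q_2$ with $q_1,q_2\in\bar C[x]$ coprime and nonconstant. A B\'ezout relation $uq_1+vq_2=1$ gives a splitting $A=A_1+A_2$, where $A_1=\frac{v}{q_1}\sum_i a_iB_i$ has denominator dividing $q_1$ and $A_2=\frac{u}{q_2}\sum_i a_iB_i$ has denominator dividing $q_2$. I claim both $A_1,A_2$ are integral everywhere. At a point where $q_1$ is regular, $A_1$ is a $\bar C(x)$-combination of the integral operators $B_i$ with locally regular coefficients, hence integral there; at a root $\beta$ of $q_1$ — which cannot be a root of $q_2$ — the same reasoning shows $A_2$ is integral, and therefore $A_1=A-A_2$ is integral at $\beta$ too. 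Thus $A_1,A_2\in\mathcal{O}_L$. Since $A=A_1+A_2$ is not in the module, at least one summand fails to lie in the module and hence belongs to $E$, now with strictly fewer distinct prime factors in its denominator. Iterating, I obtain an element of $E$ whose denominator is a power $(x-\alpha)^m$ of a single linear factor.

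For the second stage I fix such an $\alpha$ and, among all elements of $E$ whose denominator is a power of $(x-\alpha)$, choose one, call it $A$, for which the exponent $m$ is minimal; here $m\geq1$ because elements of $E$ are not in the module. If $m\geq2$, then $(x-\alpha)A$ is again integral and has denominator $(x-\alpha)^{m-1}$, so by minimality of $m$ it cannot lie in $E$ and must therefore sit in the module, say $(x-\alpha)A=\sum_i b_iB_i$ with $b_i\in\bar C[x]$. But then $A=\frac1{x-\alpha}\sum_i b_iB_i$, contradicting $m\geq2$. Hence $m=1$ and $A$ has exactly the form~\eqref{eq:1}. The step I expect to be the main obstacle is the first stage: the delicate point is establishing that the partial-fraction pieces $A_1,A_2$ are \emph{globally} integral, and this is precisely where the locality of integrality and the closure of integral series under multiplication by regular functions — both consequences of Proposition~\ref{prop:int} — carry the argument.
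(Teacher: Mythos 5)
Your proof is correct, but it takes a genuinely different and much longer route than the paper, which settles the lemma with a single multiplication. Writing $A=a_0B_0+\cdots+a_dB_d$ with $a_i\in\bar C(x)$ and letting $p\in\bar C[x]$ be the (least) common denominator of the $a_i$, the paper picks a root $\alpha$ of $p$ and passes to $\frac{p}{x-\alpha}A=\frac1{x-\alpha}\bigl((pa_0)B_0+\cdots+(pa_d)B_d\bigr)$. This already has the form~\eqref{eq:1}; it is integral because $\frac{p}{x-\alpha}\in\bar C[x]$ and $\mathcal{O}_L$ is a $\bar C[x]$-module, and it lies outside $\bar C[x]B_0+\cdots+\bar C[x]B_d$ because some $pa_i$ is not divisible by $x-\alpha$ and the $B_i$ are $\bar C(x)$-linearly independent. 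So the whole content of the lemma is carried by the module structure of $\mathcal{O}_L$, with no local analysis of integrality at all. Your two-stage argument --- partial fractions to isolate a single linear factor, then descent on its multiplicity --- is valid, and its first stage establishes something genuinely stronger (elements of $E$ can be split into pieces each singular at only one point), at the price of invoking the locality of integrality and the ring structure from Proposition~\ref{prop:int}. But your second stage is the paper's trick in disguise: once the denominator is $(x-\alpha)^m$, multiplying by the polynomial $(x-\alpha)^{m-1}$ lands directly in the desired form, so both the minimality argument and the entire first stage can be bypassed. The step you flagged as the main obstacle --- global integrality of the partial-fraction pieces --- is thus best handled not by overcoming it but by avoiding it.
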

\begin{proof}
  Let $A\in E$, say $A=a_0B_0+\cdots+a_dB_d$ for some $a_i\in\bar
  C(x)$. Since $A\not\in\bar C[x]B_0+\cdots+\bar C[x]B_d$, at least one $a_i$
  must be in $\bar C(x)\setminus\bar C[x]$.  Let $p\in\bar C[x]$ be the common denominator
  of all the~$a_i$, and let $\alpha\in\bar C$ be a root of~$p$.  Then
  $\frac{p}{x-\alpha}A$ has the required form.  To see that it belongs to~$E$,
  notice that $\frac{p}{x-\alpha}\in\bar C[x]$ and
  $\mathcal{O}_L$ is a $\bar C[x]$-module, and that $\frac{p}{x-\alpha}A\not\in
  \bar C[x]B_0+\cdots+\bar C[x]B_d$. 
\end{proof}

\begin{lemma}\label{lemma:10}
  If $A\in E$ and $P\in\bar C[x]B_0+\cdots+\bar C[x]B_d$, then $A+P\in E$.
\end{lemma}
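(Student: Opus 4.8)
The plan is to verify directly the three requirements that membership in $E$ imposes on $A+P$: that it is integral, that its order is at most~$d$, and that it does not lie in the module $M:=\bar C[x]B_0+\cdots+\bar C[x]B_d$. Two of these I would dispose of immediately. Since $A\in E$ we have $\ord(A)\le d$, and since $P\in M$ is a $\bar C[x]$-linear combination of $B_0,\dots,B_d$, each of order at most~$d$ (left multiplication by a polynomial in $\bar C[x]$ does not raise the order in~$D$), we get $\ord(P)\le d$, whence $\ord(A+P)\le d$. For the non-membership I would argue by contradiction: if $A+P\in M$, then, $M$ being a $\bar C[x]$-module containing~$P$, subtraction yields $A=(A+P)-P\in M$, contradicting $A\in E$; hence $A+P\notin M$.

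The only substantive point is integrality, i.e.\ $A+P\in\mathcal O_L$. I would obtain this from the closure properties of~$\mathcal O_L$: by Definition~\ref{def:2} the integral elements form a $\bar C[x]$-module (consistent with Proposition~\ref{prop:int}, where the integral series are shown to be closed under addition), so they are in particular stable under addition. Thus it suffices to know $A\in\mathcal O_L$, which holds because $A\in E\subseteq\mathcal O_L$, together with $P\in\mathcal O_L$; then $A+P\in\mathcal O_L$ follows at once.

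Hence the crux reduces to showing $P\in\mathcal O_L$, and for this I would invoke the standing invariant of the $d$-th iteration: $B_0,\dots,B_{d-1}$ are integral by the induction hypothesis, and the current candidate $B_d$ is integral as well---being either the initial choice $sDB_{d-1}$ or an element of $E\subseteq\mathcal O_L$ installed in step~7---so that $M\subseteq\mathcal O_L$. As a $\bar C[x]$-linear combination of the integral generators $B_0,\dots,B_d$, the element $P$ then lies in~$\mathcal O_L$. I expect this to be the only place where genuine content enters; everything else is formal bookkeeping with $M$ and the definition of~$E$. The delicate ingredient, strictly speaking, is precisely the inclusion $M\subseteq\mathcal O_L$---equivalently, the integrality of the current~$B_d$---which is exactly the invariant the algorithm maintains throughout the loop in steps~5--7.
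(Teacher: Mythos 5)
Your proof is correct and follows essentially the same route as the paper: closure of $\mathcal{O}_L$ under addition for integrality, the trivial order bound, and the subtraction argument for non-membership. The only difference is that you spell out why $\bar C[x]B_0+\cdots+\bar C[x]B_d\subseteq\mathcal{O}_L$ (the loop invariant on the integrality of the current $B_d$), which the paper takes for granted.
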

\begin{proof}
  $A\in E\subseteq\mathcal{O}_L$ and $P\in\bar C[x]B_0+\cdots+\bar C[x]B_d\subseteq\mathcal{O}_L$
  implies that $A+P\in\mathcal{O}_L$.
  It is also clear that $\ord(A+P)\leq d$, because $\ord(A)\leq d$ and $\ord(P)\leq d$.
  Finally, to show that $A+P\not\in\bar C[x]B_0+\cdots+\bar C[x]B_d$, assume otherwise.
  Then also $A=(A+P)-P\in\bar C[x]B_0+\cdots+\bar C[x]B_d$ in contradiction to $A\in E$.
\end{proof}

\begin{lemma}\label{lemma:11}
  If $E$ contains an element of the form~\eqref{eq:1}, then it also
  contains such an element with $a_0,\dots,a_{d-1}\in\bar C$ and $a_d=1$.
\end{lemma}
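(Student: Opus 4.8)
The plan is to start from the given element $A=\frac1{x-\alpha}(a_0B_0+\cdots+a_dB_d)\in E$ with $a_i\in\bar C[x]$ and to massage it in two stages: first reduce the polynomial coefficients modulo $x-\alpha$ so that they become constants, and then normalize the coefficient of $B_d$ to~$1$. The only tools needed are Lemma~\ref{lemma:10} (adding elements of $\bar C[x]B_0+\cdots+\bar C[x]B_d$ keeps one inside~$E$), the fact that $\mathcal O_L$ is a $\bar C[x]$-left module, and the inductive hypothesis that $B_0,\dots,B_{d-1}$ already generate all integral elements of order at most $d-1$.

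For the first stage I would write each $a_i=(x-\alpha)q_i+c_i$ with $q_i\in\bar C[x]$ and $c_i=a_i(\alpha)\in\bar C$, so that
\[
  A=\sum_{i=0}^d q_iB_i+\frac1{x-\alpha}\sum_{i=0}^d c_iB_i.
\]
The first sum lies in $\bar C[x]B_0+\cdots+\bar C[x]B_d$, hence by Lemma~\ref{lemma:10} the element $A'=A-\sum_i q_iB_i=\frac1{x-\alpha}(c_0B_0+\cdots+c_dB_d)$ is again in~$E$, and now all its coefficients $c_i$ are constants in~$\bar C$. For the second stage I would scale by $c_d^{-1}$: since $\mathcal O_L$ is a $\bar C[x]$-module it is in particular closed under multiplication by constants, and scaling by a nonzero constant does not change membership in the complement of $\bar C[x]B_0+\cdots+\bar C[x]B_d$, so $c_d^{-1}A'\in E$ and has exactly the required form, with the coefficients of $B_0,\dots,B_{d-1}$ in $\bar C$ and the coefficient of $B_d$ equal to~$1$.

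The step I expect to be the real obstacle, and the one where the inductive setup is indispensable, is verifying that $c_d\neq0$ so that the final scaling is legitimate. Here I would argue by contradiction: if $c_d=0$, then $A'$ involves only $B_0,\dots,B_{d-1}$ and therefore has order at most $d-1$, while still being integral since $A'\in E\subseteq\mathcal O_L$. By the induction hypothesis every integral element of order at most $d-1$ lies in $\bar C[x]B_0+\cdots+\bar C[x]B_{d-1}\subseteq\bar C[x]B_0+\cdots+\bar C[x]B_d$, which contradicts $A'\in E$. Hence $c_d\neq0$ and the normalization goes through. Everything else in the argument is the routine bookkeeping of reducing coefficients modulo $x-\alpha$ and invoking the module structure.
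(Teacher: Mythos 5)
Your argument is correct and follows essentially the same route as the paper: reduce each $a_i$ modulo $x-\alpha$ to a constant, invoke Lemma~\ref{lemma:10} to stay in~$E$, use the inductive hypothesis that $B_0,\dots,B_{d-1}$ generate all integral elements of order at most $d-1$ to conclude the $B_d$-coefficient is nonzero, and normalize. The paper's proof is just a terser version of exactly this, so there is nothing to add.
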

\begin{proof}
  Let  $A=\frac1{x-\alpha}\bigl(a_0B_0+\cdots+a_dB_d\bigr)\in E$ be
  of the form~\eqref{eq:1}. For each $i=0,\dots,d$, write
  $a_i=(x-\alpha)p_i+a'_i$ with $p_i\in\bar C[x]$ and  $a'_i\in\bar C$. By
  Lemma~\ref{lemma:10}, $A\in E$ implies $A'\in E$ for
  $A':=\frac1{x-\alpha}\bigl(a'_0B_0+\cdots+a'_{d-1}B_{d-1}+a'_dB_d\bigr)$. Since
  $B_0,\dots,B_{d-1}$ are assumed to generate the submodule of all the
  elements of $\mathcal{O}_L$ of order at most~$d-1$, we have $a'_d\neq0$.
  Dividing $A'$ by $a'_d$ yields an element of $E$ of the requested form.
\end{proof}

\begin{lemma}\label{lemma:12}
  If $E$ contains an element of the form~\eqref{eq:1}
  with $a_0,\dots,a_{d-1}\in\bar C$ and $a_d=1$, then it also contains
  such an element with $a_0,\dots,a_{d-1}\in C(\alpha)$ and $a_d=1$.
\end{lemma}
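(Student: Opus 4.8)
The plan is to use a Galois descent (trace-averaging) argument to push the coefficients of a witness down from $\bar C$ to $C(\alpha)$. Start with an element $A=\frac1{x-\alpha}\bigl(a_0B_0+\cdots+a_{d-1}B_{d-1}+B_d\bigr)\in E$ whose coefficients $a_0,\dots,a_{d-1}$ lie in $\bar C$ and with $a_d=1$. Let $K\subseteq\bar C$ be a finite Galois extension of $C(\alpha)$ containing all the $a_i$ (for instance the Galois closure of $C(\alpha)(a_0,\dots,a_{d-1})$; here characteristic zero guarantees separability), and put $G=\operatorname{Gal}(K/C(\alpha))$. Each $\sigma\in G$ extends to an automorphism of $\bar C$ fixing $C(\alpha)$ pointwise (such an extension exists because $\bar C$ is algebraically closed, extending $\sigma$ by the identity on a transcendence basis), and thence to a ring automorphism of $\bar C(x)[D]$ that fixes $x$ and commutes with $D$. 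Since $\alpha\in C(\alpha)$ and, as maintained by the algorithm, the operators $B_0,\dots,B_d$ have coefficients in $C(x)$, every $\sigma$ fixes $x-\alpha$ and each $B_i$, so $\sigma(A)=\frac1{x-\alpha}\bigl(\sigma(a_0)B_0+\cdots+\sigma(a_{d-1})B_{d-1}+B_d\bigr)$.

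The central claim is that each $\sigma\in G$ maps $E$ into itself. Because $L\in C[x][D]$ we have $\sigma(L)=L$ and $\sigma(\langle L\rangle)=\langle L\rangle$, so $\sigma$ descends to $\bar C(x)[D]/\langle L\rangle$ and satisfies $\sigma(P\cdot f)=\sigma(P)\cdot\sigma(f)$. The key point is that $\sigma$ sends any series solution $f$ of $L$ at a point $\beta$ to a series solution $\sigma(f)$ of $L$ at $\sigma(\beta)$, acting on the coefficients while leaving every exponent $\mu$ (which lies in $C$) and every logarithmic power $j$ unchanged; since the integrality condition $\mu-\iota(\mu+\set Z,j)\geq0$ depends only on $\mu$ and $j$, the series $\sigma(f)$ is integral iff $f$ is. Letting $\beta$ and $f$ range over all points and all solutions, this yields $\sigma(\mathcal O_L)=\mathcal O_L$, so $\sigma(A)$ is again integral. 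As $\sigma$ visibly preserves the order of an operator and, being invertible on $\bar C[x]B_0+\cdots+\bar C[x]B_d$, preserves non-membership in this module, we conclude $\sigma(A)\in E$.

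Finally I would average. The integral elements of order at most $d$ form a sub-$\bar C[x]$-module of $\mathcal O_L$, in particular a $\bar C$-vector space, so
\[
  \bar A:=\frac1{|G|}\sum_{\sigma\in G}\sigma(A)
         =\frac1{x-\alpha}\Bigl(\bar a_0B_0+\cdots+\bar a_{d-1}B_{d-1}+B_d\Bigr)
\]
again lies in $\mathcal O_L$ and has order at most $d$, where $\bar a_i=\frac1{|G|}\sum_{\sigma\in G}\sigma(a_i)=\frac1{|G|}\operatorname{Tr}_{K/C(\alpha)}(a_i)\in C(\alpha)$. Crucially, the coefficient of $B_d$ in $\bar A$ is $\frac1{x-\alpha}\notin\bar C[x]$, so $\bar A$ cannot lie in $\bar C[x]B_0+\cdots+\bar C[x]B_d$; hence $\bar A\in E$, and it has exactly the form~\eqref{eq:1} with coefficients in $C(\alpha)$ and $a_d=1$, as required.

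The step I expect to be the main obstacle is the preservation of integrality under $\sigma$: it is precisely here that the earlier normalization $\nu\in C$ (so that all exponents are fixed by the automorphisms of $\bar C$ over $C$) is indispensable, and one must argue carefully that $\sigma$ merely permutes the fundamental systems among conjugate base points without disturbing the exponents or the logarithmic structure that govern integrality.
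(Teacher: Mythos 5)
Your proof is correct, but it uses a genuinely different descent mechanism than the paper's. The paper picks $C(\alpha)$-linearly independent elements $e_0=1,e_1,\dots,e_n\in\bar C$ spanning a subspace $V$ containing $a_0,\dots,a_d$, observes that for a fundamental system $b_1,\dots,b_r\in C(\alpha)[[[x-\alpha]]]$ each $A\cdot b_j$ has coefficients in~$V$, and that discarding all but the $e_0$-coordinate of every coefficient cannot create non-integral terms; the projected operator $[e_0]A$ then has coefficients in $C(\alpha)$ and still has $a_d=1$. You instead average $A$ over the Galois group of a finite Galois extension $K/C(\alpha)$ containing the $a_i$, which replaces each $a_i$ by $\frac1{|G|}\operatorname{Tr}_{K/C(\alpha)}(a_i)\in C(\alpha)$ while fixing the coefficient of~$B_d$; both routes end with the same observation that the $B_d$-coefficient $\frac1{x-\alpha}$ certifies non-membership in $\bar C[x]B_0+\cdots+\bar C[x]B_d$ (valid because $\ord(B_i)=i$ makes the representation unique). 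What the paper's projection buys is economy: no Galois closure, no extension of automorphisms to the possibly enormous, non-algebraic field~$\bar C$, and no need to verify $\sigma(\mathcal{O}_L)=\mathcal{O}_L$ globally --- the only nontrivial integrality check is at $\alpha$ itself, since away from $\alpha$ the factor $\frac1{x-\alpha}$ is a unit power series and integrality of $A$ is inherited from that of the $B_i$ for arbitrary constant coefficients. Your argument can be streamlined in the same spirit: because each $\sigma$ fixes $C(\alpha)$ pointwise, it fixes a fundamental system chosen in $C(\alpha)[[[x-\alpha]]]$, so $\sigma(A)\cdot b_j=\sigma(A\cdot b_j)$ is integral without ever extending $\sigma$ beyond the algebraic closure of $C(\alpha)$ or touching solutions at other base points. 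As written, your heavier global version is still sound --- the extension of $\sigma$ to $\bar C$ exists via a transcendence basis, and the exponents and logarithm powers governing integrality are untouched because they lie in $C$ and $\set N$ --- it is simply more machinery than the statement requires.
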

\begin{proof}
  Let $A\in E$ be of the form~\eqref{eq:1} with $a_0,\dots,a_{d-1}\in\bar C$ and $a_d=1$.
  Since $\bar C$ is necessarily a $C(\alpha)$-vector space, there are some
  $C(\alpha)$-linearly independent elements $e_0,\dots,e_n$ of $\bar C$ such
  that $a_0,\dots,a_d$ all belong to $V=e_0 C(\alpha)+\cdots+ e_n C(\alpha)$. We
  may assume $e_0=1$. Consider a fundamental system $b_1,\dots,b_r\in C(\alpha)[[[x-\alpha]]]$ 
  of~$L$. Then
  each $A\cdot b_j$ has coefficients in $V$ and, since $A\in
  E\subseteq\mathcal{O}_L$, only involves integral terms. For an element $v\in V$
  let us write $[e_i]v$ for the coordinate of~$v$ with respect to~$e_i$. By the linear
  independence of the $e_i$ over $C(\alpha)$, the series $[e_i]\bigl(A\cdot
  b_j\bigr)=([e_i]A)\cdot b_j$ obtained from $A\cdot b_j$ by replacing each
  coefficient by its $e_i$-coordinate will be
  integral.
  In particular, the operator $A_0=[e_0]A\in C(\alpha)[x][D]$ must belong
  to~$E$. Because of $[e_0]a_d=[e_0]1=1$, it meets all the requirements. 
\end{proof}

\begin{lemma}\label{lemma:13}
  If $E$ contains an element of the form~\eqref{eq:1} with
  $a_0,\dots,a_{d-1}\in C(\alpha)$ and $a_d=1$, then it also contains
  such an element with $a_0,\dots,a_{d-1}\in C[x]$ and $a_d=1$.
\end{lemma}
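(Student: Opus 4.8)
The plan is to exploit that the pole location $\alpha$ is algebraic over~$C$ in order to convert the scalar coefficients $a_i\in C(\alpha)$ into polynomial coefficients lying in~$C[x]$. (That $\alpha$ is algebraic over~$C$ is forced by the setting: since $L\in C[x][D]$ and, by Lemma~\ref{lem:denom}, any pole of an integral element must divide the leading coefficient $\ell_r\in C[x]$, the relevant $\alpha$ is a root of a polynomial over~$C$.) Because $\alpha$ is algebraic we have $C(\alpha)=C[\alpha]$, so first I would write each coefficient as $a_i=g_i(\alpha)$ with $g_i\in C[t]$, and then replace the constant $\alpha$ by the variable~$x$ in these representations, defining
\[
  A'=\frac{1}{x-\alpha}\bigl(g_0(x)B_0+\cdots+g_{d-1}(x)B_{d-1}+B_d\bigr).
\]
By construction $A'$ is again of the form~\eqref{eq:1}, its coefficients $g_i(x)$ now lie in~$C[x]$, and $a_d=1$; so the only remaining task is to check that $A'$ still belongs to~$E$.

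The key point is that $A'$ differs from the given $A$ only by an element of the module generated by $B_0,\dots,B_d$. Indeed, since $g_i\in C[t]$, the elementary divisibility $(x-\alpha)\mid g_i(x)-g_i(\alpha)$ lets me write $g_i(x)-a_i=(x-\alpha)h_i(x)$ with $h_i\in C(\alpha)[x]\subseteq\bar C[x]$. Therefore
\[
  A'-A=\frac{1}{x-\alpha}\sum_{i=0}^{d-1}\bigl(g_i(x)-a_i\bigr)B_i=\sum_{i=0}^{d-1}h_i(x)B_i\in\bar C[x]B_0+\cdots+\bar C[x]B_d.
\]
Applying Lemma~\ref{lemma:10} with $P=A'-A$ then immediately yields $A'=A+P\in E$, which finishes the argument.

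I do not anticipate a serious obstacle: the whole argument reduces to the single divisibility $(x-\alpha)\mid g(x)-g(\alpha)$ combined with the module-closure property already encapsulated in Lemma~\ref{lemma:10}. The one point that deserves care is the very first step---one must make sure that $\alpha$ is genuinely algebraic over~$C$, for otherwise $C(\alpha)$ would be a rational function field and each $g_i$ could only be taken as a rational function of~$\alpha$, so that the substitution $\alpha\mapsto x$ would produce rational rather than polynomial coefficients. This is exactly where the hypothesis $L\in C[x][D]$ and the condition from Lemma~\ref{lem:denom} that poles of integral elements divide~$\ell_r$ come into play.
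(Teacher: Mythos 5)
Your proposal is correct and follows essentially the same route as the paper: view each $a_i\in C(\alpha)$ as a polynomial in $\alpha$, substitute $x$ for $\alpha$, observe that the difference is divisible by $x-\alpha$ so that $A'-A$ lies in $\bar C[x]B_0+\cdots+\bar C[x]B_{d-1}$, and conclude via Lemma~\ref{lemma:10}. Your added remark that $\alpha$ must be algebraic over $C$ (so that $C(\alpha)=C[\alpha]$) makes explicit a point the paper leaves implicit.
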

\begin{proof}
  For every $n>0$ we have $x-\alpha\mid x^n-\alpha^n$ in
  $\bar C[x]$, and thus also $x-\alpha\mid p(x)-p(\alpha)$ for $p\in\bar
  C[x]\setminus\bar C$. Therefore, if we view the $a_i\in C(\alpha)$ as
  polynomials in~$\alpha$, then replacing $\alpha$ in them by $x$ amounts to
  adding some polynomial multiple of $(x-\alpha)$ to them. This change means for
  $A=\frac1{x-\alpha}(a_0B_0+\cdots+a_{d-1}B_{d-1}+B_d)$ that adding a suitable
  element $P\in
  C(\alpha)[x]B_0+\cdots+C(\alpha)[x]B_{d-1}\subseteq\mathcal{O}_L$ turns $A$
  into an operator of the requested form. 
  By Lemma~\ref{lemma:10}, this new operator also belongs to~$E$.
\end{proof}

\begin{theorem}
  If $E\neq\emptyset$, then there exists an element $A\in E$ of the form
  \[
    A=\frac1\k\bigl(a_0B_0+\cdots+a_{d-1}B_{d-1}+B_d\bigr)
  \]
  with $\k\in C[x]$ an irreducible factor of $\ell_r$ and
  $a_0,\dots,a_{d-1}\in C[x]$ such that $\deg(a_i)<\deg(\k)$ for all~$i$.
\end{theorem}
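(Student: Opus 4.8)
The plan is to start from the element provided by the preceding lemmas and refine it by two further operations: reducing its numerator coefficients modulo an irreducible polynomial, and spreading its single pole over all conjugate points so that the result acquires coefficients in the small field~$C(x)$. Concretely, since $E\neq\emptyset$, Lemma~\ref{lemma:9} gives an element of the form~\eqref{eq:1}, and Lemmas~\ref{lemma:11}--\ref{lemma:13} normalize it to $A_1=\frac{1}{x-\alpha}(a_0B_0+\cdots+a_{d-1}B_{d-1}+B_d)\in E$ with $\alpha\in\bar C$ and $a_0,\dots,a_{d-1}\in C[x]$. Let $p\in C[x]$ be the monic minimal polynomial of $\alpha$ over $C$; it is irreducible, and its roots $\alpha=\alpha_1,\dots,\alpha_m$ are distinct because $C$ has characteristic zero. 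Writing $a_i=q_ip+\bar a_i$ with $\bar a_i\in C[x]$ and $\deg\bar a_i<\deg p$, and using $x-\alpha\mid p$ so that each $q_i\frac{p}{x-\alpha}B_i\in\bar C[x]B_i$, Lemma~\ref{lemma:10} lets me subtract these to obtain $A_2=\frac{1}{x-\alpha}(\bar a_0B_0+\cdots+\bar a_{d-1}B_{d-1}+B_d)\in E$ with $\deg\bar a_i<\deg p$.

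Next I would show that $p$ divides $\ell_r$. Since the coefficient of $B_d$ in $A_2$ is $\frac{1}{x-\alpha}$ and $B_0,\dots,B_d$ are $\bar C(x)$-linearly independent of pairwise distinct orders, $A_2$ has a genuine pole at~$\alpha$. Rewriting $A_2$ in the standard basis $1,D,\dots,D^{r-1}$ and applying Lemma~\ref{lem:denom} yields $x-\alpha\mid\ell_r$. Applying the automorphisms of $\bar C$ over $C$ that send $\alpha$ to its conjugates shows that every $\alpha_j$ is a root of~$\ell_r$, and since the $\alpha_j$ are distinct this gives $p=\prod_j(x-\alpha_j)\mid\ell_r$, with $p$ irreducible in $C[x]$ as required.

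Finally I would symmetrize over the conjugates. For each $j$ choose an automorphism $\sigma_j$ of $\bar C$ fixing $C$ with $\sigma_j(\alpha)=\alpha_j$. Because $L$, the $B_i$, and the $\bar a_i$ have coefficients in~$C$, applying $\sigma_j$ to $A_2$ leaves the numerator untouched and only relocates the pole, so that $\sigma_j(A_2)=\frac{1}{x-\alpha_j}(\bar a_0B_0+\cdots+B_d)=:A_2^{(j)}$. Integrality is preserved by $\sigma_j$ (the exponents lie in~$C$ and are fixed), and the module $\bar C[x]B_0+\cdots+\bar C[x]B_d$ is $\sigma_j$-stable, so each $A_2^{(j)}\in E$. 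The partial-fraction identity $\frac1p=\sum_{j=1}^m\frac{1}{p'(\alpha_j)(x-\alpha_j)}$ then gives
\[
  A:=\sum_{j=1}^m\frac{1}{p'(\alpha_j)}\,A_2^{(j)}=\frac1p\bigl(\bar a_0B_0+\cdots+\bar a_{d-1}B_{d-1}+B_d\bigr),
\]
which has coefficients in $C(x)$ and exactly the desired shape. It lies in $\mathcal{O}_L$ as a $\bar C$-linear combination of elements of $\mathcal{O}_L$, and it is not in $\bar C[x]B_0+\cdots+\bar C[x]B_d$ since the coefficient $\frac1p$ of $B_d$ is not a polynomial; hence $A\in E$.

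The main obstacle is exactly this last passage from a single $\bar C$-point to a $C(x)$-denominator: the set $E$ is not closed under $\bar C$-linear combinations, so membership of the symmetrized $A$ in $E$ cannot be inherited from the $A_2^{(j)}$ and must instead be argued directly from the non-polynomial coefficient $\frac1p$ of~$B_d$. A second delicate point is the invocation of Lemma~\ref{lem:denom}: one must ensure that the pole of $A_2$ seen in the $B$-basis persists in the $D$-basis, which uses that $B_0,\dots,B_{r-1}$ remain a local integral basis at every point where $\ell_r$ does not vanish.
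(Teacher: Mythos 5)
Your proof is correct and follows the same skeleton as the paper's: the chain of Lemmas~\ref{lemma:9}--\ref{lemma:13} to produce $\frac{1}{x-\alpha}(a_0B_0+\cdots+B_d)\in E$ with $a_i\in C[x]$, an appeal to Lemma~\ref{lem:denom} to force $\alpha$ to be a root of~$\ell_r$, and a conjugation argument to pass from the single linear denominator $x-\alpha$ to the minimal polynomial $\k$ of~$\alpha$. The one genuine difference is the final assembly. The paper shows directly that $\frac1\k B$ is integral by a three-way case analysis on the expansion point $\tilde\alpha$ (not a root of~$\k$; equal to~$\alpha$; a conjugate of~$\alpha$), in the latter two cases writing $\k=(x-\tilde\alpha)q$ and using that $1/q$ expands as a power series, while you produce the conjugate elements $\frac{1}{x-\alpha_j}B\in\mathcal{O}_L$ and recombine them via the partial-fraction identity $\frac1\k=\sum_j\frac{1}{\k'(\alpha_j)(x-\alpha_j)}$, using that $\mathcal{O}_L$ is closed under $\bar C$-linear combinations; membership in $E$ is then re-checked from the non-polynomial $B_d$-coefficient, and your worry on this point is well placed since $E$ is indeed not a module. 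Both routes rest on the same key fact---integrality is preserved under the conjugation sending $\alpha$ to $\alpha_j$ because the exponents lie in~$C$---so they are essentially interchangeable; yours trades the case analysis for separability of~$\k$. Your second flagged concern (that the pole of $A_2$ at $\alpha$ must survive the change to the basis $1,D,\dots,D^{r-1}$ before Lemma~\ref{lem:denom} can be invoked) is a real gap in the paper's own one-line assertion; the cleanest fix is to note that the $D^d$-coefficient of $B_d$ is throughout the algorithm $s^dB_0$ divided by factors of~$\ell_r$, hence a local unit wherever $\ell_r$ does not vanish, so the $D^d$-coefficient of $A_2$ genuinely has a pole at~$\alpha$. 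On this point your write-up is, if anything, more careful than the paper.
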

\begin{proof}
  The assumption $E\neq\emptyset$ in combination with Lemmas~\ref{lemma:9},
  \ref{lemma:11}, \ref{lemma:12}, and~\ref{lemma:13} implies that $E$ contains
  an element of the form~\eqref{eq:1} with $a_0,\dots,a_{d-1}\in C[x]$ and
  $a_d=1$. Furthermore, Lemma~\ref{lem:denom} implies that $\alpha$ is a root
  of~$\ell_r$. Let $\k\mid\ell_r$ be the minimal polynomial of~$\alpha$. We
  claim that $A:=\frac1\k B\in E$ where $B:=a_0B_0+\cdots+a_{d-1}B_{d-1}+B_d$.

  To prove this, we have to show that for every $\tilde\alpha\in\bar C$ and
  every solution $\tilde b\in C(\tilde\alpha)[[[x-\tilde\alpha]]]$ of~$L$ we still have
  that $A\cdot\tilde b$ is integral.  When $\tilde\alpha$ is not a root of~$\k$,
  this is clear because $1/\k$ admits an expansion in $C[[x-\tilde\alpha]]$, and
  multiplication of the integral series $B\cdot\tilde b$ by
  a formal power series preserves integrality by Proposition~\ref{prop:int}. When
  $\tilde\alpha=\alpha$, write $\k=(x-\alpha)q$ for some $q\in\bar C[x]$ with
  $x-\alpha\nmid q$ and note that $1/q$ admits an expansion in $\bar
  C[[x-\alpha]]$ and $\frac1{x-\alpha}B\cdot\tilde b$ is
  integral, so $\frac1\k B\cdot\tilde b$ is integral
  too. When $\tilde\alpha$ is a conjugate of~$\alpha$, note that
  $\frac1{x-\tilde\alpha}B\cdot\tilde b$ must be integral,
  because if it were not, then for the series $b\in C(\alpha)[[[x-\alpha]]]$ obtained from
  $\tilde b$ via the conjugation map that sends $\tilde\alpha$ to $\alpha$ we
  would have that $\frac1{x-\alpha}B\cdot b$ is also not
  integral, in contradiction to our choice of $a_0,\dots,a_d$. Therefore the
  same argument as in the case $\tilde\alpha=\alpha$ applies.

  This completes the proof of the claim. To complete the proof of the theorem,
  note that the claimed degree bounds on $a_i$ can be ensured by
  Lemma~\ref{lemma:10}.
\end{proof}

{\mathversion{bold}
\section{Construction of \texorpdfstring{{\large$A$}}{A} in Step~6}\label{sec:constr}}

In the previous section we have demonstrated that in step~6 of the algorithm
it suffices to search for an integral element~$A$ of the form
\[
  A = \frac1\k\bigl(a_0B_0+\cdots+a_{d-1}B_{d-1}+B_d\bigr)
\]
where $a_0,\dots,a_{d-1},\k\in C[x]$, $\k\mid\ell_r$ and $\deg(a_i)<\deg(p)$.
Conversely, this means that if no such $A$ exists, the set $E$ is empty. 

For each irreducible factor~$\k$ of $\ell_r$ one can set up an ansatz for~$A$
with undetermined coefficients $a_0,\dots,a_{d-1}$. We want to find
$a_0,\dots,a_{d-1}$ such that $A\cdot f$ is integral for all solutions~$f$
of~$L$. Note that we need to enforce integrality only for series solutions in
$x-\alpha$ where $\alpha$ is a root of~$\k$.  Choosing a fundamental system
$b_1,\dots,b_r$ of such solutions, computing the first terms of $B_j\cdot b_i$,
plugging them into the ansatz, and equating the coefficients of all non-integral
terms to zero yields a linear system for $a_0,\dots,a_{d-1}$. If this system
does not admit a solution, one knows that no such~$A$ with denominator~$\k$
exists.

In summary, the loop in lines 5--7 of Algorithm~\ref{alg:8} can be described in
more detail as follows.

\medskip 
\step {5a} 2 Let $Q\subseteq\bar C$ be a set containing exactly one
root $\alpha\in\bar C$ for each irreducible factor $\k$ of~$\ell_r$.  
\step {5b} 2 While $Q\neq\emptyset$, do the following: 
\step {5c} 3 For all $\alpha\in Q$, do the following:

\smallskip
\step {6a} 4 Let $b_1,\dots,b_r$ be a fundamental system of $L$ 
  in ${C(\alpha)}[[[x-\alpha]]]$. 
\step {6b} 4 With variables $a_0,\dots,a_{d-1}$, form the series 
\[
  \bigl(a_0B_0 + \cdots + a_{d-1}B_{d-1} + B_d\bigr)b_i
\]
for $i=1,\dots,r$. 
\step {6c} 4 Construct a linear system for $a_0,\dots,a_{d-1}$ by equating the
coefficients of all the non-integral terms in these series to zero. 

\smallskip
\step {7a} 4 If the system has a solution $(a_0,\dots,a_{d-1})\in C(\alpha)^d$:
\step {7b} 5 Let $\k$ be the minimal polynomial of $\alpha$ over~$C$.
\step {7c} 5 Replace each $a_i\in C(\alpha)=C[x]/\<\k>$ by the corresponding 
polynomial in $C[x]$ of degree less than $\deg(\k)$.
\step {7d} 5 Replace $B_d$ by $\frac1\k(a_0B_0+\cdots+a_{d-1}B_{d-1}+B_d)$.
\step {7e} 4 Otherwise
\step {7f} 5 discard $\alpha$ from~$Q$.

\medskip

Despite being more detailed than the listing given in Algorithm~\ref{alg:8},
these lines are still somewhat conceptual. An actual implementation cannot just
``let'' $b_i$ be some infinite series object, and it does not need to. What we
need are only the terms of $b_i$ that give rise to some non-integral terms of
$\bigl(a_0B_0 + \cdots + a_{d-1}B_{d-1} + B_d)b_i$. These are only finitely
many by Proposition~\ref{prop:int}. In Section~\ref{sec:bounds} we address the
question how many terms of~$b_i$ we need to compute.

\section{Termination}\label{sec:term}

\def\wr{\operatorname{wr}} % i suggest to reserve capital letters for operators and sets (e.g. fields).

The termination of van Hoeij's algorithm~\cite{vanHoeij94} is established by
the observation that the degree of a certain polynomial, starting with the
discriminant $\operatorname{Res}_y\big(M,\frac{\partial M}{\partial y}\big)$,
decreases in each iteration of the main loop. In the case of D-finite functions,
the role of the discriminant is played by the \emph{Wronskian} and a
generalized version of it. Recall that the Wronskian of the functions
$f_1(x),\dots,f_r(x)$ is defined as the determinant
\begin{equation}\label{eq:wronsk}
  W = \begin{vmatrix}
    f_1(x) & f_2(x) & \cdots & f_r(x) \\
    f_1'(x) & f_2'(x) & \cdots & f_r'(x) \\
    \vdots & \vdots & \ddots & \vdots \\
    f_1^{(r-1)}(x) & f_2^{(r-1)}(x) & \cdots & f_r^{(r-1)}(x)
  \end{vmatrix}.
\end{equation}

\begin{defi}
Let $L\in\bar C[x][D]$ be regular and let
$b_1,\dots,b_r$ be a fundamental system of~$L$ in $\bar C[[[x-\alpha]]]$ 
for some $\alpha\in\bar C$. For $B_0,\dots,B_{r-1}\in\bar C(x)[D]/\<L>$ we define the
\emph{generalized Wronskian at~$\alpha$}, as 
\[
  \wr_{L,\alpha}(B_0,\dots,B_{r-1}) :=
  \begin{vmatrix}
    B_0\cdot b_1 & \cdots & B_0\cdot b_r \\
    \vdots & \ddots & \vdots \\
    B_{r-1}\cdot b_1 & \cdots & B_{r-1}\cdot b_r
  \end{vmatrix}.
\]
\end{defi}

Note that the generalized Wronskian $\wr_{L,\alpha}(B_0,\dots,B_{r-1})$
belongs to $\bar C[[[x-\alpha]]]$ and that the choice of a different fundamental system instead of
$b_1,\dots,b_r$ only changes its value by a nonzero multiplicative constant,
which will be irrelevant for our purpose.

For the special choice $B_i=D^i$, the generalized Wronskian
$\wr_{L,\alpha}(1,D,\dots,D^{r-1})$ reduces to the Wronskian~\eqref{eq:wronsk}
with $f_i=b_i$. It is well-known and easy to check that the classical
Wronskian~\eqref{eq:wronsk} of $b_1,\dots,b_r$ satisfies the first-order
equation $\ell_rD\,W+\ell_{r-1}W=0$ and hence is hyperexponential. Since the
generalized Wronskian can be obtained from the usual Wronskian by elementary
row operations over~$C(x)$, it is clear that also the generalized Wronskian
is hyperexponential.

% Before we prove the termination of
% our algorithm, we investigate some structural properties of $\wr_L$.
% \begin{lemma}
% Let $L=\ell_0+\cdots+\ell_rD^r$ be a regular operator in the algebra
% $C[x][D]$, let $f_1,\dots,f_r$ be a fundamental system of~$L$, and let
% $B_0,\dots,B_{r-1}$ be a $C(x)$-vector space basis of $C(x)[D]/\<L>$.
% Then
% \[
%   \wr_L(B_0,\dots,B_{r-1}) = \exp(f(x))\prod_{k=1}^K(x-\alpha_k)^{\nu_k}
% \]
% for $f\in\bar C(x)$, $\nu_k\in C$, and pairwise distinct~$\alpha_k\in\bar C$.
% \end{lemma}
% \begin{proof}
% Recall that the Wronskian $W=\wr_L(1,D,\dots,D^{r-1})$ of a fundamental system
% of~$L$ satisfies the first-order equation $\ell_rD\,W+\ell_{r-1}W=0$; this
% fact follows from a straightforward calculation. In other words, the Wronskian
% is a hyperexponential term. Using elementary row operations it is easy to see
% that $\wr_L(B_0,\dots,B_{r-1})=qW$ for some nonzero rational function $q\in
% C(x)$.  So also the generalized Wronskian is a hyperexponential term.  Such a
% term can always be written in the form $\exp(f(x))\prod_{k=1}^Kg_k(x)^{\nu_k}$
% where $f,g_k\in\bar C(x)$ and $\nu_k\in\bar C$; this follows from the
% structure theorem of compatible rational functions~\cite{Christopher99,Zoladek98}.
% The regularity of~$L$ implies that the exponents $\nu_k$ are actually in~$C$.
% {\red EXPLAIN IN MORE DETAIL?}
% Exploiting the fact that $\bar C$ is algebraically closed, we can write the
% generalized Wronskian in the desired form.
% \end{proof}

\begin{theorem}
Algorithm~\ref{alg:8} terminates.
\end{theorem}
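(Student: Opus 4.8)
The plan is to reduce termination to the finiteness of a single nonnegative integer that strictly decreases with every refinement in the while-loop of lines~5--7. The outer loop over $d\in\{1,\dots,r-1\}$ is manifestly finite, so it suffices to bound the number of iterations of the while-loop at a fixed stage~$d$. During this stage only $B_d$ is modified; by the inductive hypothesis $B_0,\dots,B_{d-1}$ (with $\ord(B_i)=i$) form a $\bar C[x]$-module basis of $\mathcal{O}_L^{(\le d-1)}:=\{A\in\mathcal{O}_L:\ord(A)\le d-1\}$, and each successive candidate $B_d$ has order exactly~$d$. Writing $M=\bar C[x]B_0+\cdots+\bar C[x]B_d$, step~7 already records that one refinement replaces $M$ by a strictly larger submodule $M'\subseteq\mathcal{O}_L^{(\le d)}$. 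Since the new generator is $A=\tfrac1\k(a_0B_0+\cdots+a_{d-1}B_{d-1}+B_d)$ with $\k\mid\ell_r$, and since $\k$ is separable over~$\bar C$, the map $\bar C[x]/\<\k>\to M'/M$, $q\mapsto qA+M$, is an isomorphism: it is surjective because $M'=M+\bar C[x]A$, and its kernel is $\<\k>$ because the $B_d$-coordinate of $qA$ in the free basis $B_0,\dots,B_d$ is $q/\k$. Hence $\dim_{\bar C}(M'/M)=\deg(\k)\ge1$.

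The termination invariant is therefore $\delta(M):=\dim_{\bar C}\bigl(\mathcal{O}_L^{(\le d)}/M\bigr)$. Because $M\subseteq M'\subseteq\mathcal{O}_L^{(\le d)}$, we have $\delta(M)-\delta(M')=\dim_{\bar C}(M'/M)=\deg(\k)\ge1$, so $\delta$ drops by at least one at every iteration while remaining nonnegative. Consequently, as soon as $\delta(M)$ is known to be finite at the start of the stage, the while-loop can run only finitely often, and it exits precisely when $M=\mathcal{O}_L^{(\le d)}$, i.e.\ when $E=\emptyset$. This is exactly where the generalized Wronskian enters as the analog of van Hoeij's discriminant: padding the basis to $(B_0,\dots,B_d,D^{d+1},\dots,D^{r-1})$ gives $\wr_{L,\alpha}=\bigl(\prod_{i=0}^{d}\lc(B_i)\bigr)W$, and a refinement multiplies this quantity by $1/\k$, so its total pole order increases by $\deg(\k)$ at each step; the monotone integer $\delta(M)$ can equivalently be read off as the difference between this pole order and that of an integral basis, which is why bounding it reduces to a statement about Wronskians.

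The one nontrivial point---and the main obstacle---is to show that $\delta(M)$ is finite to begin with, equivalently that $\mathcal{O}_L^{(\le d)}$ is a finitely generated $\bar C[x]$-module of the same rank $d+1$ as~$M$. As $\bar C[x]$ is Noetherian and $M$ already has rank $d+1$, it is enough to exhibit a single nonzero $g\in\bar C[x]$, independent of the integral element, with $g\cdot\mathcal{O}_L^{(\le d)}\subseteq\bar C[x]+\bar C[x]D+\cdots+\bar C[x]D^d$; then $\mathcal{O}_L^{(\le d)}$ lies inside a finitely generated module, so it is finitely generated, and the full-rank quotient $\mathcal{O}_L^{(\le d)}/M$ is torsion, hence finite-dimensional over~$\bar C$. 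This uniform denominator bound is what the Wronskian supplies. By Lemma~\ref{lem:denom}, any integral $A=\sum_{j=0}^{d}p_jD^j$ has $p_j\in\bar C(x)$ whose denominators involve only factors of~$\ell_r$, so it remains to bound the pole order at each root~$\alpha$ of~$\ell_r$. Fixing a fundamental system $b_1,\dots,b_r$ at~$\alpha$ and the Wronskian matrix $\mathbf W=(D^ib_j)_{0\le i\le r-1,\,1\le j\le r}$, the relation $(A\!\cdot\! b_1,\dots,A\!\cdot\! b_r)=(p_0,\dots,p_{r-1})\mathbf W$ together with Cramer's rule yields $p_i=\det(\mathbf W^{(i)})/W$, where $\mathbf W^{(i)}$ is $\mathbf W$ with its $i$-th row replaced by $(A\!\cdot\! b_1,\dots,A\!\cdot\! b_r)$.

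I would finish by bounding the valuation $v_\alpha(p_i)$ from below uniformly in~$A$. The denominator $W$ satisfies the first-order equation $\ell_rD\,W+\ell_{r-1}W=0$ recalled in the previous section, so it is log-free and $v_\alpha(W)=\sum_j\nu_j-\binom r2$ is a fixed number determined by the finitely many local exponents~$\nu_j$ of~$L$ at~$\alpha$. In the numerator, every term of the Leibniz expansion of $\det(\mathbf W^{(i)})$ contains exactly one integral factor $A\!\cdot\! b_j$, whose valuation is at least $\iota(\nu_j+\set Z,\cdot)$ because $A\!\cdot\! b_j$ lies in the exponent class $\nu_j+\set Z$, together with factors $D^{i'}b_{j'}$ of valuation at least $\nu_{j'}-(r-1)$; the ultrametric estimate for determinants then gives a lower bound on $v_\alpha(\det(\mathbf W^{(i)}))$ depending only on $L$, $\alpha$, and~$\iota$. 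Dividing by~$W$ bounds $v_\alpha(p_i)$ below by some constant $-N_\alpha$, so every $p_j$ has pole order at most~$N_\alpha$ at~$\alpha$, and $g=\prod_\alpha(x-\alpha)^{N_\alpha}$ over the roots of~$\ell_r$ is the required polynomial. The only delicate bookkeeping is the presence of logarithms, which is handled by defining $v_\alpha$ as the least power of $x-\alpha$ occurring over all powers of $\log(x-\alpha)$ and checking that this valuation is additive on products and subadditive on sums, exactly as the determinant estimate requires.
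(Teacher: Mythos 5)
Your argument is sound in substance but takes a genuinely different route from the paper's. The paper's proof is a direct potential-function argument on the generalized Wronskian $\wr_{L,\alpha}(B_0,\dots,B_{r-1})$ of the \emph{current basis}: being a sum of products of integral series it is integral, and being hyperexponential it is log-free and supported on a single exponent coset, so at each root $\alpha$ of $\ell_r$ it has a well-defined nonnegative ``integrality margin'' $m_\alpha$; each replacement of $B_d$ divides this Wronskian by $\k$, so $\sum_{\alpha\in Q} m_\alpha$ is a nonnegative integer that strictly decreases, and termination follows with no need to discuss $\mathcal{O}_L$ as a module. You instead measure progress by the codimension $\delta(M)=\dim_{\bar C}\bigl(\mathcal{O}_L^{(\le d)}/M\bigr)$ and reduce termination to its finiteness, which you get from Noetherianity of $\bar C[x]$ together with a uniform denominator bound extracted from Cramer's rule applied to the Wronskian matrix of a \emph{fundamental system}. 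Both arguments ultimately rest on the Wronskian, but yours proves strictly more (that $\mathcal{O}_L$ is a finitely generated $\bar C[x]$-module with an explicit common denominator $g$ --- a fact the paper only obtains a posteriori from correctness plus termination), at the cost of more machinery; the paper's version is shorter and needs only the observation that an integral hyperexponential series has a nonnegative integer valuation excess. Two points you should tighten when writing this up: (i) ``the least power of $x-\alpha$ occurring'' is only well defined within a fixed coset $\nu+\set Z$, since $C$ need not be ordered; this is harmless here because every Leibniz term of your determinants lies in the single coset $\nu_1+\cdots+\nu_r+\set Z$, but it deserves a sentence; (ii) $v_\alpha(W)$ need not equal $\sum_j\nu_j-\binom{r}{2}$, as cancellation can raise it (this is the integer $m$ in Lemma~\ref{lem:m}), though any fixed finite value suffices for your bound.
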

\begin{proof}
First observe that during the whole execution of the algorithm,
$B_0,\dots,B_{r-1}\in C(x)[D]/\<L>$ are integral, i.e., $B_0\cdot
f,\dots,B_{r-1}\cdot f$ are integral for any series solution~$f$ of~$L$
according to Definition~\ref{def:2}. (Actually, the $B_d$'s are constructed
one after the other, but they can be initialized with $B_d=s^dD^dB_0$.)  This
means that, at any time and for any $\alpha\in\bar C$, the generalized
Wronskian $\wr_{L,\alpha}(B_0,\dots,B_{r-1})$ is integral, as it is the sum of
products of integral series (see Proposition~\ref{prop:int}). Since it is
hyperexponential, it follows that it has no logarithmic terms.  Every nonzero
term of $\wr_{L,\alpha}(B_0,\dots,B_{r-1})$ is therefore of the form
$(x-\alpha)^{\mu}$ with $\mu=\iota(\mu+\set Z,0)+m$ for some nonnegative
integer~$m$. For each $\alpha\in\bar C$ let $m_\alpha$ be the smallest such
integer. Now let $n=\sum_{\alpha\in Q}m_\alpha$ where $Q$ is defined
  as in step 5a.  Each time $B_d$ is updated in the algorithm (either
  in step~4 or in step~7d), none of the $m_\alpha$ can increase and 
  exactly one of them strictly decreases, so also $n$ decreases. More
precisely, if for example $B_d$ is replaced by
$\frac1\k\big(a_0B_0+\cdots+a_{d-1}B_{d-1}+B_d\big)$ in step~7, then
$\wr_{L,\alpha}(B_0,\dots,B_d)$ is divided by~$\k$ (recall that $\k$ is a
non-constant polynomial in $C[x]$). But the $m_\alpha$ cannot become negative
as this would violate the integrality of
$\wr_{L,\alpha}(B_0,\dots,B_{r-1})$. Therefore the algorithm must terminate.
\end{proof}

\section{Bounds}\label{sec:bounds}

In the algebraic case, van Hoeij~\cite{vanHoeij94} gave a-priori bounds on
the orders to which the $b_i$ have to be calculated. His algorithm computes their
terms at the very beginning once and for all in order to avoid their
recomputation inside the loop. He also suggested that the terms of $B_j\cdot
b_i$ for $j<d$ should not be recomputed but cached.

Nowadays, in an object-oriented programming environment, the algorithm can be
implemented in such a way that recomputations of series terms are avoided even
when no a-priori bound on the truncation order is available, via the paradigm of
lazy series~\cite{BurgeWatt89,vanderHoeven02}.

Nevertheless it is desirable to have a-priori bounds available also in the
D-finite case. A rough bound follows immediately from the discussion in
Section~\ref{sec:term}: as we have seen, the Wronskian
$\wr_{L,\alpha}\big(B_0,sDB_0,\dots,s^{r-1}D^{r-1}B_0\big)$ gives a
denominator bound for the elements of the integral basis. More refined bounds
are elaborated in the following.

Let $\alpha\in\bar C$ be a root of the leading coefficient~$\ell_r$ and
$\{b_1,\dots,b_r\}\subset C(\alpha)[[[x-\alpha]]]$ be a fundamental system
of~$L$:
\begin{equation}\label{eq:bi}
  b_i = \sum_{k=0}^{\infty} b_{i,k}\big(\log(x-\alpha)\big)\,(x-\alpha)^{\nu_i+k},
  \quad b_{i,0}\neq 0,
\end{equation}
where $b_{i,k}\in C(\alpha)[\log(x-\alpha)]$ are polynomials in
$\log(x-\alpha)$ such that for each~$i$ the degrees of $b_{i,0},b_{i,1},\dots$
are bounded by some integer~$d_i$.  According to step 5c, we have to consider
each $\alpha\in Q$ separately, so for the rest of this section we fix such
an~$\alpha$.
% Moreover, we assume that the fundamental system is given in a
% canonical form, which means that if $\nu_i=\nu_j$ for some $i\neq j$
% then the polynomials $b_{i,0}$ and $b_{j,0}$ have disjoint support.
% Any fundamental system can be easily transformed into one that is in canonical form.

In step 6a we want to replace $b_1,\dots,b_r$ by truncated series
$t_1,\dots,t_r$ of the form
\begin{equation}\label{eq:ti}
  t_i = \sum_{k=0}^{N_i} b_{i,k}\big(\log(x-\alpha)\big)\,(x-\alpha)^{\nu_i+k}\
  \text{with}\ N_i\in\set N.
\end{equation}
The bounds $N_i$ must be chosen such that this replacement does not change the
result of the algorithm.  The only critical step is when $b_1,\dots,b_r$ are
used to test the integrality of certain elements from the algebra
$C(x)[D]/\<L>$, which are not known in advance. Theorem~\ref{thm.bounds} below gives
a sufficient condition that allows us to use $t_i$ instead of~$b_i$ in the
integrality test, by asserting that its answer does not change, whatever
element of $C(x)[D]/\<L>$ we consider. For brevity, let $R$ denote the ring
$C(\alpha)[[x-\alpha]][\log(x-\alpha)]$ in the subsequent reasoning.

\begin{lemma}\label{lem:m}
Let $\{b_1,\dots,b_r\}\subset C(\alpha)[[[x-\alpha]]]$ be a fundamental system
of the form~\eqref{eq:bi} with $\nu_i$ as above, and let $W_b=(D^j\cdot b_i)_{1\leq i\leq
  r,0\leq j<r}$. Then we can find an $m\in\set N$ such that
\[
  \det(W_b)=\sum_{k=0}^\infty w_k\,(x-\alpha)^{\nu_1+\dots+\nu_r-r(r-1)/2+m+k}
\]
with $w_0\neq0$.
\end{lemma}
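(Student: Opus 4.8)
The plan is to analyze the Wronskian determinant $\det(W_b)$ where the entries $D^j\cdot b_i$ are generalized series. The key structural fact is that each $b_i$ starts at exponent $\nu_i$, so differentiating $j$ times produces a leading term proportional to $(x-\alpha)^{\nu_i - j}$ (up to the falling-factorial factor and logarithmic contributions). First I would expand the determinant and track the valuation (the lowest occurring exponent) of each summand. A naive term-by-term lower bound on the exponent comes from the main diagonal or, more carefully, from the Leibniz expansion: each product picks one entry from each row $j=0,\dots,r-1$ and each column $i=1,\dots,r$, contributing total exponent at least $\sum_i \nu_i - \sum_j j = \nu_1+\dots+\nu_r - r(r-1)/2$. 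This establishes that $\det(W_b)$, if nonzero, has valuation of the form $\nu_1+\dots+\nu_r - r(r-1)/2 + m$ for some integer $m\geq 0$, which is exactly the claimed shape.

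The crucial point I must establish is that $\det(W_b)$ is genuinely nonzero, so that such an $m$ exists (i.e.\ the series is not identically zero). This follows because $b_1,\dots,b_r$ form a fundamental system of the regular operator $L$, hence are linearly independent over $\bar C$; the classical Wronskian of a linearly independent solution set of a linear ODE is nonzero (indeed, as noted in Section~\ref{sec:term}, it is hyperexponential and satisfies $\ell_r D\,W + \ell_{r-1} W = 0$). Since $\det(W_b)$ is precisely this classical Wronskian of $b_1,\dots,b_r$, it is a nonzero element of $C(\alpha)[[[x-\alpha]]]$, and therefore has a well-defined lowest-order term with nonzero coefficient $w_0$.

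It remains to argue that $m$ is a nonnegative \emph{integer} and that the exponents in the expansion advance in integer steps, so the sum can be written with running index $k$. The entries $D^j\cdot b_i$ all lie in $(x-\alpha)^{\nu_i-j}R$ where $R=C(\alpha)[[x-\alpha]][\log(x-\alpha)]$; each such space consists of series whose exponents differ from $\nu_i-j$ by nonnegative integers. A product of $r$ such entries, one from each row and column, therefore lies in $(x-\alpha)^{\sum_i\nu_i - r(r-1)/2}R$, and summing over the $r!$ permutations keeps us in this same space (it is an $R$-module). Thus $\det(W_b)\in (x-\alpha)^{\nu_1+\dots+\nu_r - r(r-1)/2}R$, every exponent appearing is $\nu_1+\dots+\nu_r - r(r-1)/2$ plus a nonnegative integer, and $m$ is simply the smallest such integer that actually occurs with a nonzero coefficient.

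The main obstacle I anticipate is the bookkeeping with the logarithmic parts: the coefficients $b_{i,k}$ are polynomials in $\log(x-\alpha)$, so the entries of $W_b$ are series with polynomial-in-log coefficients, and I must verify that expanding the determinant and collecting powers of $(x-\alpha)$ still yields coefficients $w_k$ in $C(\alpha)[\log(x-\alpha)]$ with $w_0\neq0$. The cleanest way around this is to observe that since $R$ is a ring, the whole computation of $\det(W_b)$ takes place inside $(x-\alpha)^{\nu_1+\dots+\nu_r-r(r-1)/2}R$, so the logarithmic terms never interfere with the valuation argument; they merely live inside the coefficients $w_k$. The nonvanishing of the leading coefficient $w_0$ is then guaranteed not by a direct computation but by the nonvanishing of the Wronskian as a whole, which is the conceptual heart of the proof.
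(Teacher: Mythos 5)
Your proposal is correct and follows essentially the same route as the paper: bound the valuation of each entry $D^j\cdot b_i$ by $(x-\alpha)^{\nu_i-j}R$, conclude $\det(W_b)\in(x-\alpha)^{\nu_1+\cdots+\nu_r-r(r-1)/2}R$, and invoke the nonvanishing of the Wronskian of a fundamental system to get the existence of $m\geq0$. The paper's proof is just a terser version of yours; your extra remarks on the Leibniz expansion and the logarithmic bookkeeping are fine but not needed beyond the observation that $R$ is a ring.
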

\begin{proof}
For the $(i,j)$-entry of $W_b$ we have
\[
  (W_b)_{i,j} = D^{j-1}\cdot b_i \in (x-\alpha)^{\nu_i-j+1}R
\]
and therefore
\[
  \det(W_b) \in (x-\alpha)^{\nu_1+\cdots+\nu_r-r(r-1)/2}R.
\]
Note that $\det(W_b)\neq0$ because it is precisely the Wronskian of
$b_1,\dots,b_r$. It follows that a unique $m\geq0$ with the desired
property exists.
\end{proof}

\begin{theorem}\label{thm.bounds}
Let $L\in C(x)[D]$ be an operator of order~$r$ and $\{b_1,\dots,b_r\}\subset
C(\alpha)[[[x-\alpha]]]$ be a fundamental system of~$L$ with $\nu_i$ and $d_i$
as above. Moreover, let $m\in\set N$ be as in Lemma~\ref{lem:m} and let
$N_1,\dots,N_r\in\set N$ be given by
\[
  N_i = m\, +\!\! \max_{\genfrac{}{}{0pt}{1}{1\leq j\leq r}{0\leq k<d_i+r}} \Big(\iota(\nu_i-\nu_j+\set Z,k) -(\nu_i-\nu_j)\Big).
\]
If $t_i$ is the truncation~\eqref{eq:ti} of $b_i$ at order~$N_i$, for $1\leq
i\leq r$, then for all $B\in C(x)[D]/\<L>$ we have the equivalence:
\begin{equation}\label{eq:equiv}
  \forall i\colon B\cdot b_i\ \text{is\ integral}\ \iff
  \forall i\colon B\cdot t_i\ \text{is\ integral}.
\end{equation}
\end{theorem}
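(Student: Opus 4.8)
The plan is to show that truncating each $b_i$ at order $N_i$ cannot disturb the integrality test for any $B \in C(x)[D]/\langle L\rangle$. The key idea is that the difference $B\cdot b_i - B\cdot t_i = B\cdot(b_i - t_i)$ involves only terms of order at least $\nu_i + N_i + 1$ (up to the lowering effect of applying $D$), so if $N_i$ is large enough, this tail contributes only \emph{integral} terms and therefore cannot create or destroy a non-integral term. The subtlety is that $B$ is \emph{not} known in advance: we must bound the denominator that $B$ can introduce, and this is exactly where the Wronskian parameter $m$ from Lemma~\ref{lem:m} enters. Since $B$ is an element of $C(x)[D]/\langle L\rangle$, it acts as an $r\times 1$ row vector against the fundamental system, and its denominator is controlled by Cramer's rule via $\det(W_b)$, whose order is shifted by precisely $m$ beyond the naive estimate $\nu_1+\cdots+\nu_r-r(r-1)/2$.

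First I would fix an arbitrary $B\in C(x)[D]/\langle L\rangle$ and write $B = \sum_{j=0}^{r-1} c_j D^j$ with $c_j \in C(x)$. The action $B\cdot b_i = \sum_j c_j (D^j b_i)$ produces terms whose order can drop by at most $j \le r-1$ relative to $b_i$, so to reach order $\mu$ in $B\cdot b_i$ one only needs terms of $b_i$ down to order $\mu - (\text{denominator order of the }c_j) + \text{something}$. The essential point is that the $c_j$ can have a pole at $\alpha$, but by Cramer's rule, the order of the pole that $B$ can introduce while still keeping $B\cdot b_i$ integral for \emph{all} $i$ is bounded by $m$: if $B$ had a deeper pole, then some generalized Wronskian entry would force a non-integral term. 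This is the conceptual heart, and it is the step I expect to be the main obstacle, because one must argue carefully that the \emph{only} way $B$ can act integrally is with bounded denominator, and that this bound is captured by $m$ rather than by something depending on the unknown $B$.

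Next I would make the threshold explicit. A term $(x-\alpha)^\mu \log(x-\alpha)^k$ in $B\cdot b_i$ is non-integral exactly when $\mu - \iota(\mu + \mathbb{Z}, k) < 0$, i.e. when $\mu < \iota(\nu_i - \nu_j + \mathbb{Z}, k)$ for the relevant residue class $\mu + \mathbb{Z} = \nu_i - \nu_j + \mathbb{Z}$ arising from the denominator contributed by $B$ against the $j$-th fundamental entry. The largest such critical order, accounting for the $\log$-degree $k$ ranging up to $d_i + r$ (the extra $r$ coming from differentiation raising the $\log$-degree) and for the pole depth bounded by $m$, is exactly
\[
  N_i = m + \max_{\substack{1\le j\le r\\ 0\le k< d_i+r}} \bigl(\iota(\nu_i - \nu_j + \mathbb{Z}, k) - (\nu_i - \nu_j)\bigr).
\]
Truncating at $N_i$ therefore retains every term of $b_i$ that could possibly produce a non-integral term in $B\cdot b_i$, and discards only terms that land at integral orders.

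Finally I would close the equivalence~\eqref{eq:equiv} in both directions. The reverse implication is immediate: $B\cdot t_i$ is a truncation of $B\cdot b_i$, so any non-integral term of $B\cdot t_i$ is already present in $B\cdot b_i$, whence integrality of $B\cdot b_i$ forces integrality of $B\cdot t_i$. For the forward direction I would argue that $B\cdot b_i - B\cdot t_i = B\cdot(b_i - t_i)$ lies entirely in integral territory: since $b_i - t_i \in (x-\alpha)^{\nu_i + N_i + 1} R$ and $B$ lowers orders by at most the pole depth $m$ (bounded as above), every term of $B\cdot(b_i-t_i)$ has order strictly exceeding the critical threshold and is thus integral by the choice of $N_i$. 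Hence $B\cdot t_i$ and $B\cdot b_i$ agree on all potentially non-integral terms, and integrality of the truncated products implies integrality of the full products. Combining the two directions yields the stated equivalence for every $B\in C(x)[D]/\langle L\rangle$.
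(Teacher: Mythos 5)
Your guiding intuition---that the difference $B\cdot b_i-B\cdot t_i=B\cdot(b_i-t_i)$ should live entirely at integral orders---is the right starting point, but the proof as written has two gaps, and the second one breaks the backward implication. First, the claim that ``$B\cdot t_i$ is a truncation of $B\cdot b_i$'' is false: the coefficients $c_j\in C(x)$ of $B=\sum_j c_jD^j$ expand as full Laurent series at $\alpha$, so $c_jD^jt_i$ has terms at arbitrarily high orders that can overlap with, and cancel against, the contributions of the discarded tail $c_jD^j(b_i-t_i)$. Nothing is immediate here; both directions reduce to controlling $B\cdot(b_i-t_i)$, which requires a lower bound on the pole order of the $c_j$ at~$\alpha$. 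Second, your only source of such a bound is the assertion (flagged by you as the main obstacle, and not proved) that integrality of $B\cdot b$ forces the pole depth of the $c_j$ to be at most~$m$. Even granting a corrected version of this, the bound is available \emph{only under the hypothesis that $B\cdot b$ is integral}. In the direction ``$B\cdot b$ not integral $\Rightarrow$ $B\cdot t$ not integral'' the element $B$ may have arbitrarily deep poles, no bound on $\ord_\alpha(c_j)$ exists, and your tail estimate gives nothing. This direction is where your argument fails.

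The paper sidesteps the problem by never bounding $c$ at all. Writing $B\cdot b=W_bc$ and $B\cdot t=W_tc$ with $W_b=(D^j\cdot b_i)$, it derives $B\cdot t=(\mathrm{Id}_r-ZW_b^{-1})(B\cdot b)$ with $Z=W_b-W_t$, and applies Cramer's rule to $W_b^{-1}$ (this is where $m$ from Lemma~\ref{lem:m} enters) to show that every term of $(ZW_b^{-1})_{i,j}$ has the form $(x-\alpha)^{\iota(\nu_i-\nu_j+\set Z,k)+\ell}\log(x-\alpha)^k$ with $\ell\geq1$ and $k<d_i+r$; the extra $r$ in the logarithm bound comes from the minors of $W_b$ appearing in $W_b^{-1}$, not from differentiation as you suggest. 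With this in hand, one direction follows because products and sums of integral series are integral (Proposition~\ref{prop:int}), and the other follows because the \emph{most} non-integral term of $B\cdot b$ cannot be cancelled by $(ZW_b^{-1})(B\cdot b)$, whose terms are strictly closer to integrality---an argument valid for arbitrary~$B$. To repair your proof you would need to replace the pole-bound step by this matrix identity and the accompanying non-cancellation argument (or an equivalent), since a pole bound on $c$ without assuming integrality of $B\cdot b$ is unobtainable.
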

\begin{proof}
We introduce the matrix $W_b=(D^j\cdot b_i)_{1\leq i\leq r,0\leq j<r}$ as
before, and the short notation $B\cdot b=(B\cdot b_1,\dots,B\cdot
b_r)$. Analogously we define $W_t$ and $B\cdot t$. A vector resp. matrix is
called integral if all its entries are integral.  If $c$ is the coefficient
vector of~$B$, i.e., $c\cdot(1,D,\dots,D^{r-1})=B$, then we have $B\cdot b =
W_b\,c$ and $B\cdot t = W_t\,c$.  Combining these two equations we get
\begin{equation}\label{eq:wtwb1}
  B\cdot t=W_tW_b^{-1}\,(B\cdot b).
\end{equation}
Setting $Z=W_b-W_t$ yields
\begin{equation}\label{eq:zwb1}
  W_tW_b^{-1}=\mathrm{Id}_r-ZW_b^{-1}.
\end{equation}

The proof is split into two parts, according to the two directions
of the equivalence~\eqref{eq:equiv}.

\emph{Part 1:} If we assume that $B\cdot b$ is integral, then \eqref{eq:wtwb1}
exhibits that the integrality of $W_tW_b^{-1}$ is a sufficient condition to
conclude that also $B\cdot t$ is integral, using Proposition~\ref{prop:int}.
By \eqref{eq:zwb1} it suffices to show that $ZW_b^{-1}$ is integral.  First of
all we have to argue that $W_b^{-1}\in C(\alpha)[[[x-\alpha]]]^{r\times r}$
since otherwise Definition~\ref{def:1} would not be applicable. In
Section~\ref{sec:term} we have remarked that the Wronskian $\det(W_b)$ is
hyperexponential.  In particular, it involves no logarithmic terms and
therefore is invertible in $C(\alpha)[[[x-\alpha]]]$.  Using Cramer's rule we
find that
\[
  \big(W_b^{-1}\big)_{i,j} = (-1)^{i+j} \frac{\det W_b^{[j,i]}}{\det W_b} \in (x-\alpha)^{i-\nu_j-m-1}R,
\]
where $W_b^{[j,i]}$ is the matrix obtained by deleting row~$j$ and column~$i$
from~$W_b$. So the entries of $W_b^{-1}$ are series in $C(\alpha)[[[x-\alpha]]]$.
The fact that $\det W_b^{[j,i]}$ satisfies a differential equation of order
less than or equal to~$r$ implies that the highest power of $\log(x-\alpha)$
that can appear in the entries of $W_b^{-1}$ is~$r-1$. On the other hand, it
is easy to see that $Z_{i,j}\in(x-\alpha)^{\nu_i+N_i-j+2}R$, so it follows
that
\begin{equation}\label{eq:v1}
  \big(ZW_b^{-1}\big)_{i,j}\in(x-\alpha)^{\nu_i-\nu_j+N_i-m+1}R,
\end{equation}
and that herein $\log(x-\alpha)$ appears with exponent at most $d_i+r-1$. By our
choice of~$N_i$ the series in~\eqref{eq:v1} is integral for all
$1\leq i,j\leq r$ and therefore the whole matrix $ZW_b^{-1}$.

\emph{Part 2:} Now assume that $B\cdot b$ is not integral.  Then from
\[
  B\cdot t = \big(\mathrm{Id}_r-ZW_b^{-1}\big)(B\cdot b) = B\cdot b - \big(ZW_b^{-1}\big)(B\cdot b)
\]
it follows that $B\cdot t$ is non-integral as well. To see this, let $n$ be
the largest integer such that a term of the form $(x-\alpha)^{\iota(\mu+\set
  Z,k)-n}\log(x-\alpha)^k$ appears in $B\cdot b$ for some $\mu\in C$ and
$k\in\set N$.  Let $i$ be an index such that a term of the given form appears
in $B\cdot b_i$ with nonzero coefficient. This term cannot be canceled in
\[
  B\cdot t_i = B\cdot b_i - \sum_{j=1}^r \big(ZW_b^{-1}\big)_{i,j} \big(B\cdot b_j\big)
\]
because all terms of the series $(ZW_b^{-1}\big)_{i,j}$ are of the form
$(x-\alpha)^{\iota(\nu_i-\nu_j+\set Z,k)+\ell}\log(x-\alpha)^k$ with
$\ell\geq1$ by our choice of~$N_i$. So also $B\cdot t$ is not integral.
\end{proof}

\section{Comparison with the\texorpdfstring{\hfill\break}{} algebraic case}

We have shown that the underlying ideas of van Hoeij's algorithm for computing
integral bases of algebraic function fields apply in a more general context.
Indeed, it is fair to regard van Hoeij's algorithm as a special case of our
algorithm, since every algebraic function is also D-finite. Recall that an
algebraic function field $C(x)[y]/\<M>$ with some irreducible polynomial $M$ of
degree~$d$ becomes a differential field if we set $D\cdot c=0$ for all $c\in C$,
$D\cdot x=1$, and
\[
  D\cdot y:=-\frac{\frac d{dx}M}{\frac d{dy}M}\bmod M.
\] 
Since $C(x)[y]/\<M>$ is also a
$C(x)$-vector space of dimension~$d$, it is clear that any $d+1$ elements must
be $C(x)$-linearly dependent. This implies the existence of an operator $L\in
C(x)[D]$ of order at most $d$ with $L\cdot y=0$. Usually there is no such
operator of lower order, which means that $y,D\cdot y,\dots,D^{d-1}\cdot y$ are
$C(x)$-linearly independent and thus a basis of $C(x)[y]/\<M>$. In this case, a
vector space basis $\{B_1,\dots,B_d\}\subseteq\set C(x)[y]/\<L>$ is an integral
basis in the sense of Definition~\ref{def:2} if and only if $\{B_1\cdot
y,\dots,B_d\cdot y\}\subseteq\set C(x)[y]/\<M>$ is an integral basis of the
algebraic function field in the classical sense.

When $y\in C(x)[y]/\<M>$ is annihilated by an operator $L$ of order less
than~$d$, we can compute the minimal-order operators $L_0,\dots,L_{d-1}$ which
annihilate $y^0,\dots,y^{d-1}$, respectively, and take
$L=\lclm(L_0,\dots,L_{d-1})$. Then the $C(x)$-vector space generated by all solutions
of $L$ is the whole field $C(x)[y]/\<M>$, and if 
$\{B_1,\dots,B_n\}$ is an integral basis for~$L$, then $\{B_i\cdot
y^j:i=1,\dots,n, \ j=0,\dots,d-1\}$ generates the $C[x]$-module of all integral
elements of $C(x)[y]/\<M>$.

As a less brutal approach, we can simply replace $y$ by some other generator of
the field. In practice, most field generators will have an annihilating operator
of order~$d$, but none of smaller order.

\begin{example}
  An integral basis for the field $\set Q(x)[y]/\<M>$ with
  $M=y^3-x^2$ is $\big\{1,y,\frac1x y^2\big\}$. The lowest-order differential operator
  annihilating $y$ is $L=3xD-2$, which is not useful because its order is less than 
  the degree of~$M$.

  Instead, let us try $Z=1+y+y^2$ as generator. We have $\set Q(x)[y]/\<M>=\set
  Q(x)[Z]/\<N>$, where $N=Z^3-3Z^2-3(x^2-1)Z-x^4+2x^2-1$ is the minimal
  polynomial of~$Z$.  Given $N$ instead of~$M$ as input, van Hoeij's algorithm
  finds the following integral basis for $\set Q(x)[Z]/\<N>$:
  \begin{equation}\label{eq:ib1}
    \left\{1,Z,\frac{Z^2}{x(x-1)(x+1)}-\frac{(x^2+2)Z}{x(x-1)(x+1)}-\frac{1}{x}\right\}.
  \end{equation}
  The lowest order annihilating operator of $Z$ is $L=9x^2D^3+9xD^2-D$. It has the right
  order and our Mathematica implementation returns the integral basis
  $\bigl\{1,xD,xD^2+\frac{1}{3}D\bigr\}$. 
%  \begin{equation}\label{eq:ib2}
%    \Bigl\{1,xD,xD^2+\frac{1}{3}D\Bigr\}.
%  \end{equation}
  We can express its derivatives as polynomials in~$Z$, using
  \[
    D\cdot Z = \frac{-2Z^2+2(2x^2+1)Z}{3x(x-1)(x+1)},
  \]
%  \begin{align*}
%    D\cdot Z &= \frac{-2Z^2+2(2x^2+1)Z}{3x(x-1)(x+1)},
%    D^2\cdot Z &= \frac{-6Z^2+2(2x^2+7)Z+8(x^2-1)}{9x^2(x-1)(x+1)},
%  \end{align*}
%   Plugging these expressions into~\eqref{eq:ib2} yields
  and obtain the following integral
  basis for % the algebraic function field
  $\set Q(x)[Z]/\<N>$:
  \[
    \left\{Z,\frac{-2Z^2+2(2x^2+1)Z}{3(x-1)(x+1)},
      \frac{8(-Z^2+(x^2+2)Z+x^2-1)}{9x(x-1)(x+1)}\right\}.
  \]
  Applying a change of basis with the unimodular matrix
  \[
  \frac18\begin{pmatrix}
    8 & -12 & 9x \\ 8 & 0 & 0 \\ 0 & 0 & -9
  \end{pmatrix}
  \]
  gives the integral basis~\eqref{eq:ib1} computed by Maple.
\end{example}

% \begin{example}
%   Consider an algebraic function defined by the minimal polynomial
%   \[
%     M = x^2(1-x)+2xy^2-y^3 \in\set Q(x)[y].
%   \]
%   According to van Hoeij's algorithm as implemented in Maple,
%   an integral basis for $\set Q(x)[y]/\<M>$ 
%   is given by
%   $\bigl\{1,y,\frac{y^2}x\bigr\}$.
%   The solution space of the operator
%   \begin{alignat*}1
%     &L = 24 - 24x\,D + 3(11+5x)x^3\,D^2+(x-1)(5x+27)x^3\,D^3
%   \end{alignat*}
%   is generated by the algebraic functions whose minimal polynomial is~$M$.
%   According to our Mathematica implementation of Algorithm~\ref{alg:8}, an 
%   integral basis of $\set Q(x)[D]/\<L>$ is given by
%   \begin{alignat*}1
%     \Bigl\{
%       &1,\quad
%       \frac{18}x+(x-1)(27+5x)D,\\
%       &-\frac{120}x-5(11x-27)D+5(5x^3+22x^2-27x)D^2
%     \Bigr\}.
%   \end{alignat*}
%   Using
%   \begin{alignat*}1
%     D\cdot y &=\tfrac{6 (x-1) x+(5 x^2+29 x-18) y-8 y^2}{(x-1) x (5 x+27)}\\
%     D^2\cdot y &=\tfrac{4 (x-1) x (19 x-27)+2(31 x^2-78 x-81) y-8 (11 x-27) y^2}{(x-1)^2 x^2 (5 x+27)^2}
%   \end{alignat*}
%   this integral basis can be rephrased as
%   \begin{alignat*}1
%     &\bigl\{y,6 (x-1)+(5 x+29) y-\frac{8 y^2}{x},10-55 y\bigr\},
%   \end{alignat*}
%   and this basis obviously generates the same $C[x]$-module as the integral basis found
%   by van Hoeij's algorithm.
% \end{example}

One of the features of integral bases for algebraic function fields is that they
allow an extension of the classical Hermite reduction for integration of
rational functions to the case of algebraic functions. This was observed by
Trager~\cite{Trager84}. In order to make this work, Trager requires that both the
integral basis as well as the integrand should be ``normal at infinity''. This
corresponds to the condition in the rational case that the rational function to
be integrated must not have a polynomial part. Trager shows that normality of
the integrand can always be achieved by applying a suitable change of variables,
and he gives an algorithm that turns an arbitrary integral basis into one that
is normal at infinity. After that, the Hermite reduction process looks very similar
to the rational case. We give here an example for a non-algebraic D-finite 
function. 

\begin{example}
  % y = exp(x) + sqrt(x)
  % f = (2x+3)/(x*(1-x))*y + 5/(1-x)*y'
  % w0 = 1, w1 = (1/(2x-1))*(2x*Der[x]-1)
  Let $L=(2x+1)-(4x^2+1)D+2(2x-1)xD^2$; its solutions are $\exp(x)$ and $\sqrt{x}$,
  but we will not use this information. Let us just write $y$ for a solution of~$L$. 
  An integral basis of $\mathcal{O}_L$ is given by $\{1,\frac{1}{2x-1}(2xD-1)\}$.
  Let $\omega_0:=y$ and $\omega_1:=\frac{1}{2x-1}(2xD-1)\cdot y$ and consider the
  function
  \[
    f=\frac{a_0\omega_0+a_1\omega_1}{uv^m}
  \]
  where $a_0=4x^2+37x-11$,
  $a_1=-28x^3+40x^2-x-1$, 
  $u=4$, $v=(x-1)x$, $m=2$.

  Hermite reduction consists in finding $b_0,b_1,c_0,c_1\in\set Q[x]$ 
  with
  \[
    \frac{a_0\omega_0+a_1\omega_1}{uv^m}
    = \Bigl(\frac{b_0\omega_0+b_1\omega_1}{v^{m-1}}\Bigr)' + \frac{c_0\omega_0+c_1\omega_1}{uv^{m-1}}.
  \]
  After working out the differentiation, multiplying by~$uv^m$, and taking the whole equation mod~$v$
  we are left with the constraint
  \[
    a_0\omega_0+a_1\omega_1 \equiv b_0 u v^m\Bigl(\frac{\omega_0}{v^{m-1}}\Bigr)'
                                 + b_1 u v^m\Bigl(\frac{\omega_1}{v^{m-1}}\Bigr)'
    \bmod v
  \]
  For the derivatives of $\omega_0$ and $\omega_1$ we have
  \[
     D\omega_0=\frac1{2x}\omega_0 - \frac{1-2x}{2x}\omega_1,\quad
     D\omega_1=\omega_1,
  \]
  so that the previous constraint can be rewritten to
  \[
    a_0\omega_0+a_1\omega_1 \equiv
    -\tfrac12
    b_0 u \bigl(3\omega_0+\omega_1)
    -2 b_1 u \omega_1\bmod v.
  \]
  Plugging in $a_0,a_1$ and~$u$ and comparing coefficients of $\omega_i$ 
  leads to the linear system
  \[
    \begin{pmatrix}
      41x-11
      \\
      11x-1
    \end{pmatrix}
    =
    \begin{pmatrix}
      2-6x & 2-2x  \\
      0 & 4-8x
    \end{pmatrix}\begin{pmatrix}
      b_0
      \\
      b_1
    \end{pmatrix}\bmod v
  \] 
  which has the solution $b_0=\tfrac12(4x+11)$, $b_1=\tfrac52(2x-1)$. Next we 
  find that
  \[
    f - \Bigl(\frac{b_0\omega_0+b_1\omega_1}{v^{m-1}}\Bigr)'=\frac{c_0\omega_0+c_1\omega_1}{uv^{m-1}}
  \]
  for $c_0=0$, $c_1=0$. Consequently, we have found that
  \[
    \int f = \frac{(11+4x)\omega_0+5(2x-1)\omega_1}{8(1-x)^2x^2}
           = \frac5{x-1}y' - \frac{2x+3}{(x-1)x}y.
  \]
  The same answer could have been found using an algorithm of Abramov and van
  Hoeij~\cite{abramov99}, using a different approach.
\end{example}

\bibliographystyle{plain}
\bibliography{integral}

\begin{thebibliography}{10}

\bibitem{abramov99}
Sergei~A. Abramov and Mark van Hoeij.
\newblock Integration of solutions of linear functional equations.
\newblock {\em Integral transforms and Special Functions}, 9:3--12, 1999.

\bibitem{bronstein98a}
Manuel Bronstein.
\newblock The lazy {H}ermite reduction.
\newblock Technical Report 3562, INRIA, 1998.

\bibitem{bronstein98}
Manuel Bronstein.
\newblock Symbolic integration tutorial.
\newblock ISSAC'98, 1998.

\bibitem{BurgeWatt89}
William~H. Burge and Stephen~M. Watt.
\newblock Infinite structures in scratchpad {II}.
\newblock In {\em Proceedings of the European Conference on Computer Algebra},
  EUROCAL '87, pages 138--148, London, UK, 1989.

\bibitem{cohen93}
Henri Cohen.
\newblock {\em A Course in Computational Algebraic Number Theory}.
\newblock Springer, 1993.

\bibitem{deJong98}
Theo de~Jong.
\newblock An algorithm for computing the integral closure.
\newblock {\em Journal of Symbolic Computation}, 26(3):273--277, 1998.

\bibitem{Ford78}
David~J. Ford.
\newblock {\em On the Computation of the Maximal Order in a {D}edekind Domain}.
\newblock PhD thesis, Ohio State University, 1978.

\bibitem{ince26}
Edward~L. Ince.
\newblock {\em Ordinary Differential Equations}.
\newblock Dover, 1926.

\bibitem{Swanson06}
Irena Swanson and Craig Huneke.
\newblock {\em Integral closure of ideals, rings, and modules}, volume 336 of
  {\em London Mathematical Society Lecture Note Series}.
\newblock Cambridge University Press, 2006.

\bibitem{Trager84}
Barry~M. Trager.
\newblock {\em Integration of algebraic functions}.
\newblock PhD thesis, Massachusetts Institute of Technology, 1984.

\bibitem{vanderHoeven02}
Joris van~der Hoeven.
\newblock Relax, but don't be too lazy.
\newblock {\em Journal of Symbolic Computation}, 34(6):479--542, 2002.

\bibitem{vanHoeij94}
Mark van Hoeij.
\newblock An algorithm for computing an integral basis in an algebraic function
  field.
\newblock {\em Journal of Symbolic Computation}, 18(4):353--363, 1994.

\end{thebibliography}

\end{document}